\renewcommand\labelenumi{(\roman{enumi})}
\renewcommand\theenumi\labelenumi
\newtheorem{theorem}{Theorem}[section]
\newtheorem{definition}[theorem]{Definition}
\newtheorem{Conjecture}[theorem]{Conjecture}
\theoremstyle{definition}
\newtheorem{remark}[theorem]{Remark}
\definecolor{verydarkgreen}{rgb}{0, 0.5, 0}
\newcommand{\R}{\mathbb R}
\newcommand{\Rd}{\mathbb{R}^d}
\newcommand{\Z}{\mathbb Z}
\newcommand{\N}{\mathbb N}
\DeclareMathSymbol{\tinyvarhexagon}{\mathord}{wasy}{57}
\numberwithin{equation}{section}
\def\XXint#1#2#3{{\setbox0=\hbox{$#1{#2#3}{\int}$}
    \vcenter{\hbox{$#2#3$}}\kern-.5\wd0}}
\title[On the optimality of the rock-salt structure]{\vspace*{-2.5cm}On the optimality of the
  rock-salt structure among lattices with charge distributions}
\author[L.~B\'{e}termin]{Laurent B\'{e}termin}
\address[L.B., M.F.]{University of Vienna, Faculty of Mathematics\\Oskar-Morgenstern-Platz 1, 1090 Vienna, Austria}
\email{laurent.betermin@univie.ac.at}
\author[M.~Faulhuber]{Markus Faulhuber}
\email{markus.faulhuber@univie.ac.at}
\address[M.F.]{RWTH Aachen University, Department of Mathematics, Schinkelstraße 2, 52062 Aachen, Germany}
\author[H.~Knüpfer]{Hans Knüpfer} \email{hans.knuepfer@math.uni-heidelberg.de}
\address[H.K.]{University of Heidelberg, MATCH and IWR, INF 205, 69120 Heidelberg, Germany}
\date{}
\begin{document}

\subjclass[2010]{74G65, 82D25} \keywords{Charged lattices, Epstein zeta
  functions, Ionic compounds, Lattice energy minimization, Theta functions.}
  
\thanks{L.~B\'{e}termin was supported by the Vienna Science and Technology Fund
  (WWTF): MA14-009, by the Austrian Science Fund (FWF) through the project F65 and by VILLUM FONDEN via the QMATH Centre of Excellence
  (grant No. 10059) during his stay in Copenhagen. M.~Faulhuber was partially
  supported by the Vienna Science and Technology Fund (WWTF): VRG12-009 and by
  the Erwin-Schrödinger program of the Austrian Science Fund (FWF):
  J4100-N32. H.~Kn\"upfer was partially supported by the German Research
  Foundation (DFG) by the project \#392124319 and under Germany's
  Excellence Strategy – EXC-2181/1 – 390900948. We thank the anonymous referees for their useful suggestions and comments.}

\begin{abstract}
	The goal of this work is to investigate the optimality of the $d$-dimensional rock-salt structure, i.e., the cubic lattice $V^{1/d}\Z^d$ of volume $V$ with an alternation of charges $\pm 1$ at lattice points, among periodic distribution of charges and lattice structures. We assume that the charges are interacting through two types of radially symmetric interaction potentials, according to their signs. We first restrict our study to the class of orthorhombic lattices. We prove that, for our energy model, the $d$-dimensional rock-salt structure is always a critical point among periodic structures of fixed density. This holds for a large class of potentials. We then investigate the minimization problem among orthorhombic lattices with an alternation of charges for inverse power laws and Gaussian interaction potentials. High density minimality results and low-density non-optimality results are derived for both types of potentials.

	Numerically, we investigate several particular cases in dimensions $2$, $3$ and $8$. The numerics support the conjecture that the rock-salt structure is the global optimum among all lattices and periodic charges, satisfying some natural constraints. For $d=2$, we observe a phase transition of the type ``triangular-rhombic-square-rectangular" for the minimizer's shape as the density decreases.
\end{abstract}

\maketitle
\vspace*{-1.35cm}
\begin{figure}[!htb]
	\includegraphics[width=.23\textwidth]{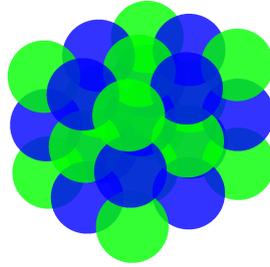}
	\vspace*{-0.4cm}
	\caption{\footnotesize{A rock-salt structure.}}\label{fig_rocksalt}
\end{figure}
\vspace*{-0.8cm}
\tableofcontents

\section{Introduction and statement of main results}

\subsection{Introduction}
The rigorous justification of periodic pattern formations, like crystals or
vortices, in nature and experiments has been extensively studied during the last
decades (see, e.g., \cite{BlancLewin-2015}). In particular, trying to understand
how pure central-forces can generate a low-energy ground-state lattice structure
is a challenging task. The rare existing analytical proofs in dimension
$d\geq 2$, which already exhibit a lot of technical difficulties, treat simplified
models where the geometry of minimizers is explicitely or implicitely
constrained, see for example
\cite{Rad2,Rad3,Crystal,ELi,TheilFlatley,Stef1,Stef2,DelucaFriesecke-2018,Friedrich:2018aa,FrieKreutSquare,BDLPSquare}. However,
the recent breakthrough in dimensions $d\in \{8,24\}$, due to Cohn et
al. \cite{CKMRV2Theta}, has shown the efficiency of Fourier analysis and number
theoretic tools from \cite{CohnElkies,CohnKumar,Viazovska,CKMRV}. With these
tools the authors of \cite{CKMRV2Theta} were able to show the universal
optimality of the $\mathsf{E}_8$ and the Leech lattices among all configurations
of charges with fixed density and absolutely summable interaction potentials $F$
of the form $F(r)=f(r^2)$, with $f$ being a completely monotone function (see
the space $\mathcal{S}$ in Def.~\ref{def-potentials}).

In the present work, we focus on the minimization among, both, periodic charge
distributions $\varphi$ (or any kind of weight associated to the types of
particles) and (simple) lattice structures $L$. All possible charge
distributions $\varphi:L\to \R$ are assumed to be periodic with respect to a
periodicity cube $K_N(L)$ of ``size" $N$, containing $N^d$ points and satisfying
some specific constraints like positivity at the origin and fixed $L^2$ norm on
$K_N(L)$ (see Def.~\ref{def-lattice} for details). We consider the space of $d$-dimensional lattices $\mathcal{L}$, and subspaces $\mathcal{L}(V)\subset \mathcal{L}$ of lattices with a fixed unit cell of volume
$V>0$. The space of all admissible
charge distributions on a given lattice $L$ is called $\Lambda_N(L)$. Our
goal is to show, both, analytically and numerically, that the (properly scaled)
cubic lattice $\Z^d$ with an alternating distribution $\varphi_\pm$ of charges
$\pm 1$ (see Def.~\ref{def-chargedlattice}), is the natural candidate as the
ground state of systems interacting through a large class of radially symmetric
potentials. Throughout this work, we will refer to these structures as the
rock-salt structure. The term is borrowed from chemistry and inspired by
Fig.~\ref{fig_rocksalt}, which illustrates the structure of
a Sodium Chloride crystal (NaCl), also known as rock-salt. For a 2-dimensional
illustration of our model see Fig.~\ref{fig_opt_charge} (a).

More precisely, for two given charges $\varphi(x),\varphi(y)$ at the points
  $x,y$, we consider the pairwise energy of $(x,y)\in \Rd \times \Rd$, given by
\begin{align}\label{intro:potential}
	f_1(|x-y|^2)+\varphi(x)\varphi(y)f_2(|x-y|^2).
\end{align}
Here  $r \mapsto f_1(r^2)$ with
$r = |x-y|$ represents the repulsion between the two
particles and
$r \mapsto  f_2(r^2)$ represents the pure charge interaction. This
interaction is, according to the fact that the term ``charge" has to be
understood in a broad sense, not necessarily Coulombian. We then ask for the
minimizer of the energy per point of the system, defined by
\begin{align}\label{eq-mainenergyintro}
	E_{f_1,f_2}[L,\varphi] \ := \ \sum_{p\in L\backslash \{0\}} f_1(|p|^2)+\frac{1}{N^d} \sum_{p\in L\backslash \{0\}} \, \sum_{x\in K_N(L)}\varphi(x)\varphi(x+p)f_2(|p|^2),
\end{align}
among pairs $(L,\varphi)$ of lattices and admissible charge distributions. In
this paper, only the case where $r \mapsto f_1(r^2)$ and
$r \mapsto f_2(r^2)$ are absolutely summable will be considered, but it is
important to notice that all the results also hold if $f_2$ does not have this
property assuming that the total net charge on the
periodicity cube is zero (see Rmk.~\ref{rmk-nonintegrablecase}). An
important example of a non--integrable potential is the Coulomb
potential, where 
  $f_2\left(r^2\right) = \frac{1}{r^{d-2}}$ for $d\geq 3$,
which is not presented in this paper but for which we have checked that our numerics exhibit the same result as in the absolute summable case.

By $\mathcal{F}$ we denote the set of all functions $f:\R_+\to \R$ which are the Laplace transform of a Borel measure $\mu_f$, such that $f(r)=O(r^{-s})$ for some $s>d/2$ as $r\to +\infty$, and by $\mathcal{S}\subset \mathcal{F}$ we denote the subset of those functions with $\mu_f$ being nonnegative (see Def.~\ref{def-potentials} for details).

%

A first study in the analysis of charged lattices,
concerning only the second term of \eqref{eq-mainenergyintro}, has been
given by two of the authors in
\cite{BeterminKnuepfer-preprint}, proving a conjecture stated by Born
\cite{Born-1921} in 1921: When $f_1\in \mathcal{F}$ and $f_2\in \mathcal{S}$,
for all orthorhombic lattices $L$ (i.e., their unit cells are hyper-cuboids, see
Def.~\ref{def-lattice}), the unique minimizer of
$\varphi\mapsto E_{f_1,f_2}[L,\varphi]$ is the alternate distribution of charges
$\varphi_\pm\in \{-1,1\}$ (see Def.~\ref{def-chargedlattice}). This result
holds for a large class of potentials $f_2$ that are not necessarily
integrable at infinity, which for instance is the case for Coulomb potentials in
dimensions $d\geq 3$. We will write $\mathcal{Q}$
(resp.~$\mathcal{Q}(V)\subset \mathcal{Q}$) for the space of orthorhombic
lattices (resp.~with a unit cell of volume $V>0$). In the triangular lattice
case, where $\mathsf{A}_2$ is the triangular lattice of unit density defined in
Def.~\ref{def-lattice}, a surprising honeycomb-like distribution of charges
$\varphi_{\tinyvarhexagon}$, with charges $+\sqrt{2}$ surrounded by 6 charges
$-\sqrt{2}/2$, has been found as the minimizer (see
Def.~\ref{def-chargedlattice} and Fig.~\ref{fig_opt_charge} (b)). For certain
compactly supported potentials $f_2$, the triangular lattice is actually a local
maximizer of the energy $L\mapsto E_{0,f_2}[L,\varphi]$ for this specific charge
distribution. This is a consequence of the results in \cite{FauSte19}, where the
conditions on the potential are just strong enough to overcome technical
difficulties in the proof. It is plausible to assume that the result would hold
for completely monotonic potentials. Another conjecture here, also mentioned in \cite{FauSte19}, is that the
triangular lattice $\mathsf{A}_2$ with its optimal charge distribution
$\varphi_{\tinyvarhexagon}$ is actually a global maximizer of $E_{0,f_2}$, where
$f_2\in \mathcal{S}$, among all optimally charged lattices
$(L,\varphi)\in \mathcal{L}(1)\times \Lambda_N(\mathcal{L}(1))$. As shown in
\cite{BeterminKnuepfer-preprint}, this general problem of minimizing energies of
type $\varphi\mapsto E_{f_1,f_2}[L,\varphi]$ in $\Lambda_N(L)$ for fixed
$L\in \mathcal{L}$ and $f_1\in \mathcal{F},f_2\in \mathcal{S}$, is equivalent to
finding the coldest point of the heat kernel on a flat torus. This task is
extremely challenging as the location of the coldest point in general depends
not only on the geometry, but also on the volume of the lattice. For research in
this direction we refer to
\cite{Baernstein-1997,BeterminPetrache,Faulhuber_Rama_2019}.
\begin{figure}[!htb]
  \subfigure[Rock-salt structure. The optimal configuration of charges for the square lattice $\Z^2$
 is the alternate distribution $\varphi_\pm$.]  { \includegraphics[width=.35\textwidth]{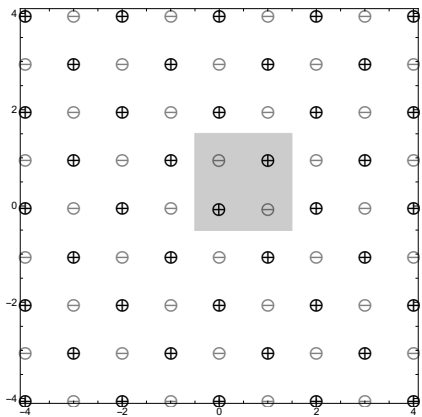}
  } \hfill \subfigure[Honeycomb-like configuration. The optimal configuration of charges for the triangular lattice $\mathsf{A}_2$ is the distribution of charges $\varphi_\tinyvarhexagon$.]  { \includegraphics[width=.35\textwidth]{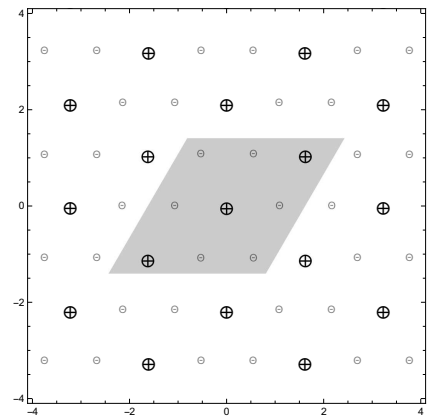} }
  \caption{\footnotesize{Optimally charged lattice structures in dimension 2.
      The total net charge of the marked periodicity cell is $0$ for both lattices.}}\label{fig_opt_charge}
\end{figure}

We first consider minimization problems with only
one type of charges (``positive''). Except for the results in \cite{CKMRV2Theta},
the search for lattices minimizing the energy for radially symmetric potentials
has been restricted to certain small classes of potentials and lattices. For
$d=2$, Montgomery \cite{Mont} (see also \cite{NonnenVoros,AftBN}) proved that
the triangular lattice uniquely minimizes the considered Gaussian energy, also
called lattice theta function (see Def.~\ref{def-epstein}), in $\mathcal{L}(V)$
for all $V>0$ (see Section \ref{sec-setting}). An important consequence is that
the same optimality is true for any completely monotone potential. Therefore,
the same result holds for example for the Epstein zeta-function (see again
Def.~\ref{def-epstein}). For $d \geq 2$, $d \not\in \{8,24\}$, only asymptotic
and local minimality results among lattices of fixed density exist (see, e.g.,
\cite{Ennola,DeloneRysh,SarStromb,Coulangeon:2010uq,Gruber,BeterminPetrache,Beterminlocal3d,CoulSchurm2018}). However,
the behavior of the energy of orthorhombic lattices is well-understood in any
dimension $d \geq 1$ and has also been proved by Montgomery \cite{Mont}: for any
fixed volume $V>0$, the cubic lattice $V^{\frac{1}{d}} \Z^d$ uniquely minimizes
the lattice theta function in the space of orthorhombic lattices
$\mathcal{Q}(V)$. Again, this includes the same result for completely monotone
interactions and, in particular, for the Epstein zeta function as already shown
by Lim and Teo \cite{LimTeo} (see Rmk.~\ref{subsec:optiZdCM} for details).

The same kind of problem has been studied for $d \geq 1$ for orthorhombic
lattices with alternating charges $\pm 1$
\cite{Faulhuber:2016aa,BeterminPetrache} and the strict maximality of $\Z^d$
holds again for any completely monotone function (see again
Rmk.~\ref{subsec:optiZdCM} for details). As previously pointed out, among
orthorhombic lattices, the minimization of the energy among charges $\varphi$
yields an alternate distribution $\varphi_\pm\in \{-1,1\}$, which is universal
for this class (see \cite[Thm. 2.4]{BeterminKnuepfer-preprint}).

 For the general problem \eqref{eq-mainenergyintro} we have to deal with two
competing interactions $f_1$, $f_2$. Then
particles of the same kind interact through the repulsive potential $f_1+f_2$
whereas particles of different kinds interact through the attractive-repulsive
potential $f_1-f_2$ (see Fig.~\ref{fig_intercation_function} for two
examples). Since $f_1$ and $f_2$ do not scale in the same way with respect to
the volume $V$ of the unit cell of the lattices (i.e., the inverse density), the
minimizer of this mixed energy must depend on $V$. In the class of
  orthorhombic lattices $\mathcal{Q}(V)$, the first term of the energy is
  minimized by $\Z^d$, whereas the second term of the energy is maximized by
  $\Z^d$.

\begin{figure}[htb]
	\subfigure[The inverse power law case for $f_1(r^2) = r^{-8}$ and $f_2(r^2) = r^{-6}$ and $r \geq 0.8$.]{
		\includegraphics[width=.45\textwidth]{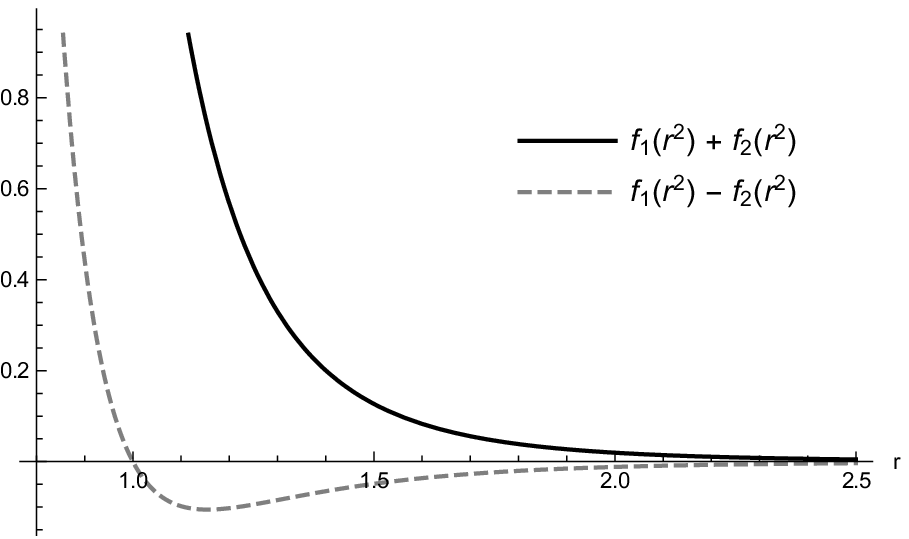}
	}
	\hfill
	\subfigure[The Gaussian case for $f_1(r^2) = e^{-2 \pi r^2}$ and $f_2(r^2) = e^{-\pi r^2}$ and $r \geq 0$.]{
		\includegraphics[width=.45\textwidth]{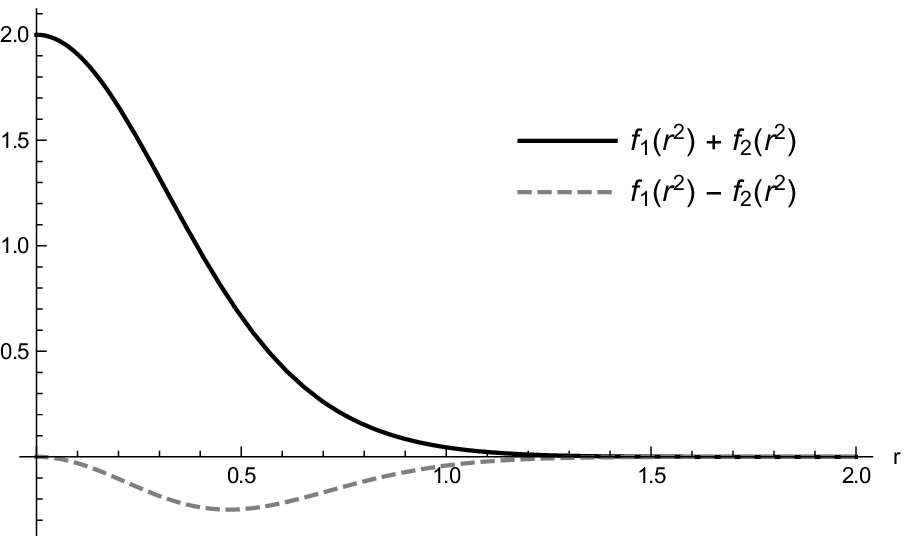}
	}
	\caption{\footnotesize{Interaction potentials for particles of same and different kinds.}}
	\label{fig_intercation_function}
\end{figure}

\subsection*{Structure of the paper} We next present our main results, noting
that the precise definitions and notations are given in Section \ref{sec-setting}. In Section \ref{subsec:design},
we give some preliminary results concerning the criticality of the cubic lattice
and its optimality for completely monotone potentials.  The inverse power law
case and the Gaussian case are treated in Section \ref{sec-invpowerlaws} and in
Section \ref{sec-Gaussian}, respectively. We will also discuss the
(non-)optimality of the $d$-dimensional cubic lattice. In the final Section
\ref{sec-numerics}, we carry out some numerical investigations which support the
Conjecture \ref{con-conj1} on the global optimality of the rock-salt structure
for certain interaction potentials.

\section{Statement of results}\label{sec-results}

Our presentation of the main results uses some basic notions and notations
  for lattices and interaction potentials. These have already been quickly
  characterized in the introduction. For a more precise and thorough description
  of the setting, we refer the reader to the Definitions in the following
  Section \ref{sec-setting}, where the notions of lattices and potentials are introduced rigorously. Furthermore, we give below a list of frequently used notations in the paper, together with the locations where those notations are introduced. We consider this to be helpful for the reader.
\begin{longtable}{rll}
$\mathcal{L}$ & - & set of all lattices (see Def.~\ref{def-lattice})\\
$\mathcal{L}(V)$ & - & set of all lattices with volume $V$ (see Def.~\ref{def-lattice})\\
$\mathcal{Q}$ & - & set of all orthorhombic lattices (see Def.~\ref{def-lattice})\\
$\mathcal{Q}(V)$ & - & set of all orthorhombic lattices of volume $V$ (see Def.~\ref{def-lattice})\\
$\mathsf{A}_2$ & - & triangular lattice (see Def.~\ref{def-lattice})\\
$\Lambda_N(L)$ & - & set of $N$-periodic charge distributions on a lattice $L$ (see Def.~\ref{def-chargedlattice})\\
$K_N(L)$ & - & $N\times N$ periodicity cube on a lattice $L$ (see Def.~\ref{def-chargedlattice})\\
$\varphi^\pm$ & - & alternate charge distributions $\pm 1$ (see Def.~\ref{def-chargedlattice})\\
$\varphi_{\tinyvarhexagon}$ & - & honeycomb-like charge distribution for 2-dimensional lattices (see Def.~\ref{def-chargedlattice})\\
$\mathcal{F}$ & - & space of potentials that are Laplace transforms of a measure (see Def.~\ref{def-potentials})\\
$\mathcal{S}$ & - & space of potentials belonging to $\mathcal{F}$ with a positive measure (see Def.~\ref{def-potentials})\\
$\mathcal{E}_{f_1,f_2}$ & - & general charged lattice energy (see Def.~\ref{def-energy})\\
$\mathcal{E}_{f_1,f_2}^\pm[L]$ & - & energy for an alternation of charges (see Def.~\ref{def-energy})\\
$\zeta_L$ & - & Epstein zeta function (see Def.~\ref{def-epstein})\\
$\theta_L$ & - & lattice theta function (see Def.~\ref{def-epstein})
\end{longtable}
According to the existing results, a first natural step is to consider the
problem of minimizing the energy \eqref{eq-mainenergyintro} with
$f_1,f_2\in \mathcal{F}$ and restricting the structures to be orthorhombic. Since the interactions in a physical multi-component system are rarely only given by potentials -- but also
by quantum effects, entropy, kinetic energy, etc. -- the orthorhombic shape of a
ground state can be first assumed as the consequence of additional effects, like
orthogonal orbital shapes, which do not appear in our model (see for instance
\cite[Sect.~5.1.4]{NaumannBook} for a discussion about metal structures). As
recalled before, a direct application of \cite{BeterminKnuepfer-preprint} is
that the alternate distribution of charges $\varphi_\pm$ minimizes
$\varphi\mapsto E_{f_1,f_2}[L,\varphi]$ in the orthorhombic case and when
$f_1\in \mathcal{F}, f_2\in \mathcal{S}$. It is therefore sufficient to
study the following energy model:
\begin{align}\label{eq-defEf1f2pm}
  E_{f_1,f_2}^\pm[L] \ := \ E_{f_1,f_2}[L,\varphi_\pm] \ %
  = \ \sum_{p\in L\backslash  \{0\}} f_1(|p|^2)+\sum_{p\in L\backslash \{0\}} \varphi_\pm(p) f_2(|p|^2).
      \end{align}
      Using spherical design tools from number theory (in particular from
      \cite{Coulangeon:2010uq}), we prove that $\Z^d$ is always a critical point
      of the energy \eqref{eq-defEf1f2pm}, in the space of general lattices
        with alternate charge distribution:
\begin{theorem}[Criticality of the cubic lattice for general potentials]\label{thm-maincriticZd}
  Let $d\geq 1$. For all $f_1,f_2\in \mathcal{F}$, and all $V>0$, $V^{\frac{1}{d}}\Z^d$ is a critical point in $\mathcal{L}(V)$ of $E_{f_1,f_2}^\pm$ defined by \eqref{eq-defEf1f2pm}.
\end{theorem}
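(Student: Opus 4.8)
The plan is to work at the level of Gram matrices. Writing $L = A\Z^d$ and $Q = A^\top A$, so that $|p|^2 = Q[x] := x^\top Q x$ for $p = Ax$ with $x\in\Z^d$, I would identify $\mathcal{L}(V)$ with the manifold $\{Q\succ 0 : \det Q = V^2\}$ (modulo $GL_d(\Z)$), on which
\[
  E_{f_1,f_2}^\pm \ = \ \sum_{x\in\Z^d\setminus\{0\}} f_1(Q[x]) \ + \ \sum_{x\in\Z^d\setminus\{0\}} \varphi_\pm(x)\, f_2(Q[x]), \qquad \varphi_\pm(x) = (-1)^{x_1+\cdots+x_d}.
\]
The cubic lattice corresponds to $Q = V^{2/d} I$, so by Lagrange multipliers criticality in $\mathcal{L}(V)$ amounts to $\nabla_Q E_{f_1,f_2}^\pm \propto \nabla_Q\det Q$ there; since $\nabla_Q\det Q \propto Q^{-1} \propto I$ at $Q = V^{2/d}I$, it suffices to show that the gradient matrix
\[
  M_{ij} \ := \ \sum_{x\in\Z^d\setminus\{0\}}\bigl[f_1'(Q[x]) + \varphi_\pm(x)\, f_2'(Q[x])\bigr]\, x_i x_j
\]
is a scalar multiple of the identity at $Q = V^{2/d}I$.

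Next I would justify that these lattice sums, and the termwise differentiation leading to $M_{ij}$, are legitimate. Using $f_k\in\mathcal{F}$, I write $f_k(r) = \int_0^\infty e^{-tr}\,d\mu_{f_k}(t)$ and read each lattice sum as an integral over $t$ of a lattice theta function: an ordinary one for the $f_1$ term, and one with half-integer characteristic $\tfrac12(1,\dots,1)$ (encoding the alternating signs) for the $f_2$ term. The decay hypothesis $f_k(r) = O(r^{-s})$ with $s>d/2$ is precisely what yields $f_k'(r) = O(r^{-s-1})$ and hence absolute convergence of $\sum_x |f_k'(Q[x])|\,|x|^2$, which permits differentiation under both the lattice sum and the Laplace integral.

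The key algebraic step is then to exploit the automorphisms of $\Z^d$, namely the hyperoctahedral group of signed coordinate permutations. Each such map is a bijection of $\Z^d\setminus\{0\}$ preserving $|x|^2 = Q[x]$ at $Q = V^{2/d}I$, and it also preserves the charge: permuting coordinates leaves $x_1+\cdots+x_d$ invariant, while a sign flip $x_j\mapsto -x_j$ changes it by the even number $2x_j$, so $\varphi_\pm$ is unchanged. Applying a single sign flip $x_j\mapsto -x_j$ shows $M_{ij} = -M_{ij} = 0$ for $i\neq j$, and a transposition of two coordinates shows $M_{ii} = M_{kk}$ for all $i,k$. Hence $M = \Lambda\, I$, which is exactly the proportionality required; equivalently, each signed shell of $\Z^d$ is a spherical $2$-design in the sense of \cite{Coulangeon:2010uq}. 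The first variation $\langle M, S\rangle = \Lambda\,\mathrm{tr}\,S$ then vanishes on every trace-free, volume-preserving deformation $S$, establishing criticality.

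I expect the main obstacle to be the analytic bookkeeping rather than the symmetry argument. Making the termwise differentiation and its interchange with the Laplace integral rigorous for the whole class $\mathcal{F}$ — where the measures $\mu_{f_k}$ may be signed and only $O(r^{-s})$ decay is assumed — together with controlling the characteristic theta function appearing in the charged term, is where the care is needed. Once $M_{ij}$ is shown to be well defined, the hyperoctahedral invariance of both $Q[x]$ and $\varphi_\pm$ forces $M\propto I$ with essentially no further computation.
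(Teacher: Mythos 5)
Your proposal is correct, and it reaches the paper's conclusion by a more self-contained route. The paper proves Theorem~\ref{thm-maincriticZd} by splitting the energy into three sums (over $\Z^d$, over the positive sublattice $L_+$, and over the negative coset $L_-$), observing via the parity identity $\sum_i n_i \equiv |n|^2 \pmod 2$ that each spherical layer of $\Z^d$ carries a constant charge, and then invoking the design-theoretic criticality criterion of \cite[Thm.~4.4(1)]{Coulangeon:2010uq} together with the fact from \cite{CoulLazzarini} that all layers of $\Z^d$ are $2$-designs. You instead carry out the first variation explicitly in Gram-matrix coordinates: criticality on $\{\det Q = V^2\}$ reduces by Lagrange multipliers to showing the gradient matrix $M_{ij}$ is proportional to the identity, and this follows because both $|x|^2$ and $\varphi_\pm(x)$ are invariant under the hyperoctahedral group (sign flips change $x_1+\cdots+x_d$ by an even amount), so sign flips kill the off-diagonal entries and transpositions equalize the diagonal. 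The mathematical heart is the same --- your symmetry computation is precisely a proof that the (signed) shells of $\Z^d$ are $2$-designs, as you note yourself, and both arguments hinge on the same parity fact --- but your version needs no splitting into $L_\pm$ and no external theorem, at the cost of redoing the variational bookkeeping that \cite{Coulangeon:2010uq} packages. One caveat applies equally to both arguments: for $f_1,f_2\in\mathcal{F}$ with signed measures, the bound $f_k'(r)=O(r^{-s-1})$ that justifies termwise differentiation is immediate for $\mathcal{S}$ (by monotonicity of $f'$) but for $\mathcal{F}$ requires reading the definition as a difference of two functions of $\mathcal{S}$ each with the stated decay; the paper glosses over exactly the same point when it applies the cited criterion beyond completely monotone potentials, so this is not a gap specific to your argument.
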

While the above theorem allows for general lattices, we note that in
particular, we also get that the rock-salt structure is a critical point
in the smaller class of orthorhombic lattices.

A typical model for ionic interactions is to consider a power law repulsion at short distance between particles, also called `Born approximation' (corresponding to the Pauli repulsion principle, see, e.g., \cite[Sect.~3.2.2]{NaumannBook}), together with a Coulomb interaction between charged particles. Staying in the space of absolutely integrable potentials, we hence consider pairs of inverse power laws
\begin{align}\label{eq-deff1f2invpower}
	\left(f_1(r), \ f_2(r)\right) \ = \ \left(r^{-p}, \ r^{-q} \right), \qquad  p>q>d/2 .
\end{align}
The energy $E_{f_1,f_2}^\pm$ can then be written in terms of the Epstein zeta function and the alternate Epstein zeta function (see Def.~\ref{def-epstein}). The next three theorems are concerned with the power law case when the potentials are of the form \eqref{eq-deff1f2invpower}. We first show the following result stating the optimality of the rock-salt structure at high density among orthorhombic lattices, as well as its non-optimality among all lattices at low density.
\begin{theorem}[Cubic lattice at high and low density]\label{thm-mainInvpowerlaws}
  Let $d\geq 1$ and let $f_1,f_2\in \mathcal{S}$ be given by
  \eqref{eq-deff1f2invpower}. Then there exist $V_0$ and $V_1$ (both depending
  only on $p$, $q$, $d$) such that the following holds: For all $0<V<V_0$,
  $(V^{\frac{1}{d}}\Z^d,\varphi_\pm)$ is the unique minimizer of $E_{f_1,f_2}$ in $\mathcal{Q}(V)$,
    i.e.,
	\begin{align}
          E_{f_1,f_2}[V^{\frac{1}{d}}L,\varphi] \  \geq \ E_{f_1,f_2}^\pm[V^{\frac{1}{d}}\Z^d], \qquad %
          \forall L \in \mathcal{Q}(1), \, N \in \N, \, \varphi \in \Lambda_N(L) \ .
	\end{align}
	Equality holds if and only if $N\in 2\N$, $\varphi=\varphi_\pm$ and $L=\Z^d$. Furthermore, for all $V>V_1$, $(V^{\frac{1}{d}}\Z^d,\varphi_\pm)$ is not a minimizer of $E_{f_1,f_2}$.
\end{theorem}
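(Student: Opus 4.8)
The plan is to exploit the mismatched homogeneities of the two interaction terms. Since $r\mapsto r^{-p}$ and $r\mapsto r^{-q}$ lie in $\mathcal{S}$ for $p,q>d/2$, the Born-type result \cite[Thm.~2.4]{BeterminKnuepfer-preprint} applies: for every fixed orthorhombic $L$ the map $\varphi\mapsto E_{f_1,f_2}[L,\varphi]$ is uniquely minimized by the alternate distribution $\varphi_\pm$ (which requires $N\in 2\N$). This reduces the whole problem to minimizing $L\mapsto E_{f_1,f_2}^\pm[V^{1/d}L]$ over $\mathcal{Q}(1)$. Writing the repulsive part through the Epstein zeta function $\zeta_L$ of Def.~\ref{def-epstein} and the charge part through $Z(L):=\sum_{x\in L\setminus\{0\}}\varphi_\pm(x)|x|^{-2q}$, the scalings $\zeta_{V^{1/d}L}(p)=V^{-2p/d}\zeta_L(p)$ and $Z(V^{1/d}L)=V^{-2q/d}Z(L)$ give
\begin{equation}
E_{f_1,f_2}^\pm[V^{1/d}L] \ = \ V^{-2p/d}\,\zeta_L(p)\;+\;V^{-2q/d}\,Z(L).
\end{equation}
I will use the two facts from Rmk.~\ref{subsec:optiZdCM}: $\Z^d$ is the unique minimizer of $\zeta_\cdot(p)$ and the unique maximizer of $Z(\cdot)$ in $\mathcal{Q}(1)$. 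Setting $A(L):=\zeta_L(p)-\zeta_{\Z^d}(p)\ge 0$ and $B(L):=Z(\Z^d)-Z(L)\ge 0$, both vanishing only at $\Z^d$, the exponent ordering $2p/d>2q/d$ makes the first term dominate as $V\to 0$ and the second as $V\to\infty$, which drives the two regimes.

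The low-density statement only requires one strictly better competitor. Fixing any $L'\in\mathcal{Q}(1)\setminus\{\Z^d\}$ (e.g.\ a rectangular lattice) and subtracting,
\begin{equation}
E_{f_1,f_2}^\pm[V^{1/d}L']-E_{f_1,f_2}^\pm[V^{1/d}\Z^d] \ = \ V^{-2q/d}\Big(V^{-2(p-q)/d}A(L')-B(L')\Big).
\end{equation}
Here $A(L')\ge 0$ is finite and $B(L')>0$; since $V^{-2(p-q)/d}\to 0$ as $V\to\infty$ the bracket turns negative for every $V$ exceeding the explicit threshold $V_1$ solving $V^{-2(p-q)/d}A(L')=B(L')$. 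Hence $(V^{1/d}L',\varphi_\pm)$ beats the rock-salt structure and the latter is not a minimizer. This is the easy half.

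The high-density half is the substantial one. Dividing the target inequality by $V^{-2q/d}$, it is equivalent to $V^{-2(p-q)/d}A(L)>B(L)$ for all $L\ne\Z^d$, hence to the single uniform bound
\begin{equation}
M \ := \ \sup_{L\in\mathcal{Q}(1)\setminus\{\Z^d\}}\frac{B(L)}{A(L)} \ < \ +\infty,
\end{equation}
after which $V_0:=M^{-d/(2(p-q))}$ works, with equality only at $L=\Z^d$. To estimate $M$ I would split $\mathcal{Q}(1)$ into three regions. Parametrising $L=\prod_i a_i\Z$ with $\prod_i a_i=1$ and letting $a_j\to 0$ for some $j$, the one-dimensional sublattice along the shortened axis dominates both sums, giving $A(L)\sim c\,a_j^{-2p}$ and $B(L)\sim c'\,a_j^{-2q}$, so $B/A\sim(c'/c)\,a_j^{2(p-q)}\to 0$ because $p>q$. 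On any set of shapes bounded away from both $\Z^d$ and this degenerate boundary one has $A\ge\delta>0$ with $B$ bounded, so $B/A$ is bounded by continuity. Only a neighbourhood of $\Z^d$ remains.

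The hard part is precisely this neighbourhood. As $\Z^d$ is a critical point of both $\zeta_\cdot(p)$ and $Z(\cdot)$ (being respectively their minimizer and maximizer), Taylor expansion in the shape coordinates $a_i=e^{t_i}$, $\sum_i t_i=0$, gives $A(L)=\tfrac12\langle H_A t,t\rangle+o(|t|^2)$ and $B(L)=\tfrac12\langle H_B t,t\rangle+o(|t|^2)$ with $H_A,H_B\succeq 0$; by the $S_d$-symmetry of $\Z^d$ both Hessians are scalar multiples of the identity on $\{\sum_i t_i=0\}$, so $B/A$ stays bounded near $\Z^d$ \emph{exactly when} $H_A$ is strictly positive definite, i.e.\ when $A(L)\ge c|t|^2$. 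Proving this nondegenerate quadratic lower bound for the Epstein zeta function at the cubic lattice — upgrading the known strict minimality to a quantitative one via an explicit second-variation computation — is the crux of the argument; once it holds the three regions combine to give $M<\infty$. The equality analysis then follows from the uniqueness in the Born step (forcing $\varphi=\varphi_\pm$, $N\in2\N$) together with $A(L)>0$ for $L\ne\Z^d$; for odd $N$ the alternate pattern is not $N$-periodic and $\min_{\varphi\in\Lambda_N(L)}E_{f_1,f_2}[L,\varphi]>E_{f_1,f_2}^\pm[L]$, which excludes those configurations from equality.
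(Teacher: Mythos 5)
Your route is structurally the same as the paper's: the Born reduction to $E_{f_1,f_2}^\pm$ via \cite{BeterminKnuepfer-preprint}, the scaling decomposition isolating the factor $V^{2(p-q)/d}$, and, for the high-density statement, a uniform bound on the ratio $B(L)/A(L)$ over $\mathcal{Q}(1)$ obtained by splitting into a degenerate region, a compact intermediate region, and a neighborhood of $\Z^d$ — the paper's quantity $\gamma(a)$ in its proof is exactly your $A/B$, and it runs the same three-case analysis. Your low-density argument (one explicit competitor $L'$ with $B(L')>0$, so the bracket turns negative for large $V$) is valid and in fact a little cleaner than the paper's, which takes an infimum of $\overline{\gamma}(a)$ over all $a\neq\mathbb{1}$ and then must argue that this infimum is positive and finite; for the non-minimality claim alone, your single competitor suffices.

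The genuine gap is the step you yourself flag as ``the crux'' and then leave unproven: the strict positive definiteness of the Hessian of $a\mapsto\zeta_{L_a}(2p)$ at $a=\mathbb{1}$ restricted to $\mathcal{A}$ (equivalently the quantitative bound $A(L)\geq c|t|^2$ near $\Z^d$), without which your region-(c) analysis, and hence $M<\infty$, does not follow. As written, the proposal is therefore incomplete. Note, however, that this is a missing citation rather than a wrong approach: the needed fact is known, and it is precisely what the paper invokes. Rmk.~\ref{subsec:optiZdCM} records that, via the product representations \eqref{eq:producttheta} and the integral representations \eqref{eq-Egintegral}--\eqref{eq-Ehpmintegral}, $\Z^d$ is the strict minimizer of the pure sum and strict maximizer of the alternating sum in $\mathcal{Q}(1)$ for any $f\in\mathcal{S}$, and moreover that the corresponding Hessians at $\Z^d$ are positive (resp.\ negative) definite — for the Epstein zeta function this goes back to Lim and Teo \cite{LimTeo}, with \cite{Mont,Faulhuber:2016aa,BeterminPetrache} covering the theta-function and alternating cases; the paper's proof then only needs l'H\^opital's rule plus this nondegeneracy (``the second derivatives are strictly positive as a straightforward calculation shows''). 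If you import that result (or carry out the second-variation computation on the product of one-dimensional theta functions, which is elementary), your three regions combine exactly as you describe and the argument closes; your Schur-lemma observation that both Hessians are scalars on $\{\sum_i t_i=0\}$ is a nice touch not present in the paper. Two minor points: in the paper's normalization your $\zeta_L(p)$ should read $\zeta_L(2p)$ (the potential is evaluated at squared distances), and your treatment of the equality case (odd $N$ excluded, $\varphi=\varphi_\pm$ forced) is at the same level of detail as the paper's, both deferring to the uniqueness statement of \cite{BeterminKnuepfer-preprint}.
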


This is, as far as we know, the first rigorous result of the optimality (and
  non--optimality) of the rock-salt structure among orthorhombic structures in
any dimension and for arbitrary charge distributions. A key point to prove this
result is that $f_1$ dominates $f_2$ at the origin while $f_2$ dominates
  $f_1$ for large distances. We note that, if we do not restrict the
minimization to the class $\mathcal{Q}(V)$, the cubic lattice is not expected to
be optimal for the minimization among lattices of small fixed volume with an alternate distribution of charges. For instance, in dimension $d=2$, the first
term of $E_{f_1,f_2}^\pm$ is  $\zeta_L(2p)$, which is dominant at high density and minimized by the triangular lattice. Therefore, it will also minimize our energy $E_{f_1,f_2}^\pm$ for alternate charges for certain small values of
$V$. However, it is reasonable to believe that there exists $V_0$ such that
$V_0^{1/d}\Z^2$ is globally minimal. The same remark can be stated in dimensions
$8$ and $24$ as a consequence of the universal optimality of $\mathsf{E}_8$ and
the Leech lattice proved in \cite{CKMRV2Theta}. We also expect the same kind of
result as Thm.~\ref{thm-mainInvpowerlaws} to hold if $f_1$ is replaced by a
Lennard-Jones type potential $f_1(r)=r^{-s}-r^{-t}$, $s>t>q>d/2$ by using the
same arguments based on properties of lattice theta functions presented in \cite{BetTheta15}.


Furthermore, we also study the local optimality of the rock-salt structure in
dimension $d=2$ among orthorhombic (i.e., rectangular) lattices having an
alternating distribution of charges. Any rectangular lattice
$L\in \mathcal{Q}(1)$ can be parametrized by only one real number $y > 0$ via
the form $L=\Z\left(y,0 \right) \oplus \Z\left( 0,y^{-1} \right)$. It is then
easy to find out for which volume the alternate square lattice
$(\Z^2,\varphi_\pm)$ is a local minimum of our energy. Numerical investigations
suggest that the local minimality of the square lattice implies its global
minimality, that is why the following result is useful.
\begin{theorem}[Local optimality in $\mathcal Q(V)$ for inverse power laws]\label{thm-local2dInvPowLaw}
	Let $d=2$ and $f_1,f_2\in \mathcal{S}$ be given by \eqref{eq-deff1f2invpower} and $E_{f_1,f_2}^\pm$ by \eqref{eq-defEf1f2pm}. Then there exists a precise value $V_{p,q}$ (see \eqref{def:Vpq2d} for the formula) such that the following holds:
	\begin{enumerate}
		\item If $V<V_{p,q}$, then $V^{\frac{1}{2}} \Z^2$ is a strict local minimum of $E_{f_1,f_2}^\pm$ in $\mathcal{Q}(V)$.
		\item If $V>V_{p,q}$, then $V^{\frac{1}{2}} \Z^2$ is a strict local maximum of $E_{f_1,f_2}^\pm$ in $\mathcal{Q}(V)$.
	\end{enumerate}
      \end{theorem}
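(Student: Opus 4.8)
The plan is to turn the problem into a one-parameter question and read off the sign of a single second derivative. In $d=2$ every lattice in $\mathcal{Q}(1)$ is, up to rotation, of the form $L_y=\Z(y,0)\oplus\Z(0,y^{-1})$ with $y>0$, so that $\mathcal{Q}(V)$ is exactly the family $V^{1/2}L_y$. Using the homogeneity $\zeta_{cL}(s)=c^{-s}\zeta_L(s)$ of the Epstein and alternate Epstein zeta functions, the energy along this family becomes
\begin{equation}
E^\pm_{f_1,f_2}[V^{1/2}L_y]\;=\;V^{-p}\,\zeta_{L_y}(2p)\;+\;V^{-q}\,\zeta^\pm_{L_y}(2q),
\end{equation}
where $\zeta_{L_y}(s)=\sum_{(m,n)\neq(0,0)}(m^2y^2+n^2y^{-2})^{-s/2}$ and $\zeta^\pm_{L_y}$ carries the extra weight $(-1)^{m+n}=\varphi_\pm(m,n)$. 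The reflection $y\mapsto y^{-1}$ only swaps the two axes and leaves the energy unchanged, so setting $u=\log y$ makes $u\mapsto -u$ a symmetry and $u=0$ (the square lattice) an automatic critical point. Hence the local nature of $\Z^2$ in $\mathcal{Q}(V)$ is governed entirely by the second derivative in $u$.

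Next I would differentiate twice at $u=0$ and write $\partial_u^2 E^\pm_{f_1,f_2}[V^{1/2}L_y]\big|_{u=0}=V^{-p}A_p-V^{-q}B_q$, where $A_p:=\partial_u^2\zeta_{L_y}(2p)\big|_{u=0}$ and $B_q:=-\,\partial_u^2\zeta^\pm_{L_y}(2q)\big|_{u=0}$. The heart of the matter is to prove $A_p>0$ and $B_q>0$. The global results recalled in the introduction — strict minimality of $\Z^2$ for $L\mapsto\zeta_L(2p)$ in $\mathcal{Q}(1)$, and strict maximality for the alternate sum $L\mapsto\zeta^\pm_L(2q)$ — already give $A_p\geq0$ and $-B_q\leq0$, but only non-strict bounds at the level of the second derivative. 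To obtain the \emph{strict} signs I would use the Gaussian representation $\zeta_L(s)=\frac{\pi^{s/2}}{\Gamma(s/2)}\int_0^\infty t^{s/2-1}(\theta_L(t)-1)\,dt$ together with its alternate analogue. On a rectangular lattice the theta functions factorise, $\theta_{L_y}(t)=\vartheta_3(e^{-\pi t y^2})\,\vartheta_3(e^{-\pi t y^{-2}})$ and $\theta^\pm_{L_y}(t)=\vartheta_4(e^{-\pi t y^2})\,\vartheta_4(e^{-\pi t y^{-2}})$, so that $\partial_u^2\theta_{L_y}(t)\big|_{u=0}$ and $\partial_u^2\theta^\pm_{L_y}(t)\big|_{u=0}$ reduce to products of one-dimensional Jacobi theta values. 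The per-temperature extremality of the square lattice for these one-dimensional products (strict convexity in $u$ for $\vartheta_3$, strict concavity for $\vartheta_4$, at each fixed $t>0$) yields a strictly signed integrand, and integrating against $t^{s/2-1}$ gives $A_p>0$ and $B_q>0$.

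Granting the signs, the conclusion is immediate. Since $p>q$,
\begin{equation}
\partial_u^2 E^\pm_{f_1,f_2}\big|_{u=0}\;=\;V^{-p}\bigl(A_p-V^{\,p-q}B_q\bigr),
\end{equation}
which is strictly positive when $V^{\,p-q}<A_p/B_q$ and strictly negative when $V^{\,p-q}>A_p/B_q$. Thus $\Z^2$ is a strict local minimum for $V<V_{p,q}$ and a strict local maximum for $V>V_{p,q}$, with the threshold given explicitly by $V_{p,q}=\bigl(A_p/B_q\bigr)^{1/(p-q)}$; spelling out $A_p$ and $B_q$ as the integrated theta second derivatives (equivalently, as convergent lattice sums of the type $\sum (m^2-n^2)^2(m^2+n^2)^{-s/2-2}$) produces the closed formula for $V_{p,q}$.

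The \emph{main obstacle} is therefore not the balancing argument but the non-degeneracy: the cited global optimality theorems alone cannot exclude a vanishing second derivative, and they do not deliver the explicit constant. The theta-function factorisation is exactly what resolves both issues at once, reducing everything to one-dimensional Jacobi theta functions whose extremal behaviour at the square point is quantitative; the remaining effort is the bookkeeping needed to evaluate $A_p$ and $B_q$ and hence pin down $V_{p,q}$ precisely.
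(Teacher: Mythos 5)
Your proposal is correct and follows essentially the same route as the paper: reduce to the one-parameter family $L_y=\Z(y,0)\oplus\Z(0,y^{-1})$, use criticality of the square lattice, split the second derivative at $y=1$ via homogeneity into a $V^{-p}$-term and a $V^{-q}$-term, obtain the strict signs of these two coefficients from the theta-function factorisation on rectangular lattices (the strict convexity/concavity at the square point, i.e.\ the Hessian definiteness recalled in Rmk.~\ref{subsec:optiZdCM}), and set $V_{p,q}$ equal to the ratio raised to the power $1/(p-q)$, exactly as in \eqref{def:Vpq2d}. The only cosmetic differences are your use of the logarithmic variable $u=\log y$ and of the $y\mapsto y^{-1}$ symmetry (rather than Thm.~\ref{thm-maincriticZd}) to get criticality, neither of which changes the argument.
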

      In particular, in view of Theorem \ref{thm-local2dInvPowLaw} for any
        $V>V_{p,q}$, the square lattice $V^{\frac{1}{2}} \Z^2$ is not a
        minimizer of $E_{f_1,f_2}^\pm$ in $\mathcal{Q}(V)$. Notice that we have chosen to omit the $V=V_{p,q}$ case since it appears too technical to obtain a general result with respect to $p$ and $q$.  Using the
      homogeneity of the Epstein zeta function and the alternating Epstein zeta
      function, given by \eqref{eq_alternate_EZ}, we obtain the following result
      which gives the optimal density for any given lattice.

\begin{theorem}[Minimal energy for a given lattice shape]\label{thm-minenergyzetahomog}
  Let $d\geq 1$, and $p>q>d/2$ and let $f_1,f_2$ be given by
  \eqref{eq-deff1f2invpower}. For $L\in \mathcal{L}(1)$, if $\zeta_L^\pm(2q)<0$, then the function
  $V\mapsto E_{f_1,f_2}^\pm[V^{\frac{1}{d}}L]$ is decreasing on $(0,V_L)$ and
  increasing on $(V_L,+\infty)$ for some $V_L>0$ (see \eqref{def:VL} for a
  formula) and its minimum is
	\begin{align}\label{eq-minenergyL}
		\mathcal{E}^\pm_L \ := \ \min_{V>0} E_{f_1,f_2}^\pm[V^{\frac{1}{d}}L] \ = \ E_{f_1,f_2}^\pm[V_L^{\frac{1}{d}}L] \ = \ \frac{(q-p)\left( -q\zeta_L^{\pm}(2q) \right)^{\frac{p}{p-q}}}{qp(p\zeta_L(2p))^{\frac{q}{p-q}}} < 0.
	\end{align}
	If $\zeta_L^\pm(2q)\geq 0$, then $V\mapsto E_{f_1,f_2}^\pm[V^{\frac{1}{d}}L]$ is strictly decreasing on $(0,+\infty)$ and does not have any minimizer.
\end{theorem}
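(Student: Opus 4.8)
The plan is to reduce the problem to a one-variable optimization in the scaling parameter $V$, exploiting the exact homogeneity of the Epstein zeta function. Writing $f_1(r)=r^{-p}$ and $f_2(r)=r^{-q}$, the energy on the scaled lattice $V^{1/d}L$ with alternating charges takes the form $E_{f_1,f_2}^\pm[V^{1/d}L]=V^{-2p/d}\zeta_L(2p)+V^{-2q/d}\zeta_L^\pm(2q)$, since each lattice vector $p$ on $V^{1/d}L$ has $|p|^2=V^{2/d}|p_L|^2$ for the corresponding vector $p_L$ on $L$. Here $\zeta_L(2p)=\sum_{p\in L\setminus\{0\}}|p|^{-2p}$ is strictly positive, while $\zeta_L^\pm(2q)=\sum_{p\in L\setminus\{0\}}\varphi_\pm(p)|p|^{-2q}$ may have either sign. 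Setting $t=V^{1/d}>0$ and $a=\zeta_L(2p)>0$, $b=\zeta_L^\pm(2q)$, the whole problem becomes the analysis of the single scalar function $g(t)=a\,t^{-2p}+b\,t^{-2q}$ on $(0,\infty)$.

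First I would compute $g'(t)=-2p\,a\,t^{-2p-1}-2q\,b\,t^{-2q-1}=-2t^{-2q-1}\big(p\,a\,t^{-2(p-q)}+q\,b\big)$. The sign of $g'$ is governed entirely by the bracket. When $b=\zeta_L^\pm(2q)\geq 0$, both $a>0$ and $b\geq 0$ (with $p>q$) force the bracket to be strictly positive for all $t>0$, so $g'(t)<0$ throughout and $g$ is strictly decreasing on $(0,\infty)$ with no interior critical point; this gives the second, degenerate case of the theorem. When $b=\zeta_L^\pm(2q)<0$, the bracket $p\,a\,t^{-2(p-q)}+q\,b$ is a strictly decreasing function of $t$ (since $p-q>0$ makes $t^{-2(p-q)}$ decreasing and $p\,a>0$) that tends to $+\infty$ as $t\to 0^+$ and to $q\,b<0$ as $t\to\infty$, hence vanishes at exactly one point $t_L$. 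One solves $p\,a\,t_L^{-2(p-q)}=-q\,b$ to get $t_L^{2(p-q)}=\frac{p\,a}{-q\,b}=\frac{p\zeta_L(2p)}{-q\zeta_L^\pm(2q)}$, and then $V_L=t_L^d$ supplies the formula referenced as \eqref{def:VL}. Since $g'$ changes sign from $+$ to $-$ in the bracket, $g$ itself decreases on $(0,V_L^{1/d})$ and increases on $(V_L^{1/d},\infty)$, so $t_L$ (equivalently $V_L$) is the unique global minimizer.

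The remaining task is to substitute $t_L$ back into $g$ and simplify to the closed form in \eqref{eq-minenergyL}. From $t_L^{-2(p-q)}=\frac{-q\,b}{p\,a}$ one extracts $t_L^{-2p}$ and $t_L^{-2q}$ as powers of $\frac{-q\,b}{p\,a}$ with exponents $\frac{p}{p-q}$ and $\frac{q}{p-q}$ respectively, after which $g(t_L)=a\,t_L^{-2p}+b\,t_L^{-2q}$ collapses, upon pulling out the common factor $\big(-q\,b\big)^{p/(p-q)}/\big(p\,a\big)^{q/(p-q)}$, to the stated expression $\frac{(q-p)\big(-q\zeta_L^\pm(2q)\big)^{p/(p-q)}}{qp\big(p\zeta_L(2p)\big)^{q/(p-q)}}$; the factor $(q-p)<0$ confirms the sign $\mathcal{E}_L^\pm<0$. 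I expect no genuine obstacle here: the only point requiring care is the \emph{bookkeeping of exponents} in the final algebraic reduction, to verify that the two terms combine with the clean coefficient $\frac{q-p}{qp}$ rather than leaving a residual. A small preliminary observation worth recording is that $\zeta_L(2p)$ and $\zeta_L^\pm(2q)$ are absolutely convergent precisely because $p>q>d/2$ guarantees $2p,2q>d$, so all series manipulations are justified.
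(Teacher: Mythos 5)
Your proposal is correct and follows essentially the same route as the paper: both reduce to a one-variable calculus problem via the homogeneity identity \eqref{eq-FVaPowerLaw}, compute the derivative (you in the variable $t=V^{1/d}$, the paper directly in $V$), identify the unique sign change of the derivative at $V_L$ when $\zeta_L^\pm(2q)<0$ (and strict monotonicity when $\zeta_L^\pm(2q)\geq 0$), and substitute back to obtain \eqref{eq-minenergyL}. Your write-up is in fact slightly more explicit than the paper's, which dismisses the $\zeta_L^\pm(2q)\geq 0$ case as trivial and leaves the final algebraic substitution as "easy to check"; your exponent bookkeeping does combine to the coefficient $\tfrac{q-p}{qp}$ exactly as claimed.
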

Therefore, minimizing $L \mapsto \mathcal{E}^\pm_L$ (given by
\eqref{eq-minenergyL}) in $\mathcal{L}(1)$ is equivalent to minimizing
$E_{f_1,f_2}^\pm$ in $\mathcal{L}(V)$, which simplifies the numerical search for
a \textit{global} minimizer.

\begin{remark}[Negativity of alternate lattice sums]
  It is unclear whether $\zeta_L^\pm(s)$ --- and more generally $E_{0,f_2}^\pm$,
  $f_2\in \mathcal{S}$ --- is negative for all dimensions $d$, all lattices
  $L\subset \R^d$ and all $s>d$. We expect this property to hold in low
  dimensions and we did not find any counterexample while checking the FCC, BCC,
  $\mathsf{E}_8$ and Leech lattices. Presently, we only know that this property
  holds for any $f_2\in \mathcal{S}$ and any orthorhombic lattice
  $L\in \mathcal{Q}$. This follows by applying the  integral representation \ref{eq-Ehpmintegral} and the fact that $\theta_4(t)<1$ for all $t>0$.

\flushright{$\diamond$}
\end{remark}

Finally, we also study the special case of Gaussian potentials of the form,
\begin{align}	\label{eq-deff1f2Gaussians}
  \left(f_1(r),f_2(r)\right) \ %
  = \ (e^{-\pi \beta r}, e^{-\pi \alpha r}), \qquad \beta>\alpha > 0,
\end{align}
which is related to many physical systems like Bose-Einstein condensates
\cite{Mueller:2002aa,AftBN} or 3-block copolymers \cite{LuoChenWei} (see
Rmk.~\ref{rmk-BEC}). In this case, for any lattice and any pair of
  functions the energy $E_{f_1,f_2}^\pm$ can be written in terms of the lattice theta and alternate lattice theta functions (see \eqref{eq:theta}).

This model generalizes the problems investigated in
\cite{Mont,BeterminPetrache,Faulhuber:2016aa}, where products involving
$\theta_3$ and $\theta_4$ were studied separately (see also
Rmk.~\ref{subsec:optiZdCM}). This case is also of interest as Gaussian
functions are the building blocks of potentials obtained by the Laplace
Transform of measures (see e.g., \cite{OptinonCM}). For the two-dimensional rock-salt structure, we find its non-optimality at low density, its minimality for small $\alpha$ and for
  fixed $\beta$ and $V > 0$, as well as its optimality at fixed density
when $f_2$ is replaced by $\varepsilon f_2$, $\varepsilon$ small enough.
\begin{theorem}[Optimality of the cubic lattice for Gaussian interactions]\label{thm-mainGaussians}
  Let $d=2$ and let $f_1,f_2$ be defined by \eqref{eq-deff1f2Gaussians}. For
    $V > 0$ and $\beta > \alpha > 0$ we have the following results:
	\begin{enumerate}
		\item There exists $V_1 = V_1(\alpha,\beta)$ such that for all $V>V_1$, $V^{\frac{1}{2}}\Z^2$ is not a minimizer of $E_{f_1,f_2}^\pm$ defined by \eqref{eq-defEf1f2pm} in $\mathcal{Q}(V)$, but a local maximizer.
		\item There exists
                  $\alpha_0 = \alpha_0(\beta,V)$ such that if
                    $\alpha < \alpha_0$ then $V^{\frac{1}{2}}\Z^2$ is the
                  unique minimizer of $E_{f_1,f_2}^\pm$ in $\mathcal{Q}(V)$.
		\item There exists $\varepsilon_0 = \varepsilon_0(\alpha,\beta, V) > 0$	
		such that, if $\varepsilon\in [0,\varepsilon_0)$,
                  $V^{\frac{1}{2}}\Z^2$ is the unique minimizer of
                  $E_{f_1,\varepsilon f_2}^{\pm}$ in $\mathcal{Q}(V)$.
	\end{enumerate}
\end{theorem}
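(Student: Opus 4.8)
The plan is to reduce the problem on $\mathcal{Q}(V)$ to a one–parameter problem and exploit the factorization of Gaussian sums into Jacobi theta functions. In dimension $2$ every $L\in\mathcal{Q}(1)$ is rectangular, $L=\Z(y,0)\oplus\Z(0,y^{-1})$ for a unique $y>0$ (with $y\leftrightarrow y^{-1}$ the only ambiguity), so that $V^{1/2}L$ has squared edge lengths $Vy^2$ and $Vy^{-2}$ and $\mathcal{Q}(V)$ is parametrized by $y$. Because $f_1,f_2$ are Gaussians and $\varphi_\pm$ alternates in each coordinate direction, the energy factorizes as
\begin{align*}
  E^\pm_{f_1,f_2}[V^{1/2}L] \ &= \ P_3(y)+P_4(y)-2, \\
  P_3(y) \ &:= \ \theta_3(\beta Vy^2)\,\theta_3(\beta Vy^{-2}), \qquad
  P_4(y) \ := \ \theta_4(\alpha Vy^2)\,\theta_4(\alpha Vy^{-2}),
\end{align*}
where $\theta_3(t)=\sum_{n\in\Z}e^{-\pi tn^2}$ and $\theta_4(t)=\sum_{n\in\Z}(-1)^ne^{-\pi tn^2}$. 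The map $y\mapsto y^{-1}$ fixes both $P_3$ and $P_4$, so $y=1$ (the square lattice) is automatically a critical point, and by Montgomery's theorem (\cite{Mont}, cited above) $P_3$ attains its \emph{unique, strict} minimum over $y>0$ at $y=1$, with $P_3(y)\to+\infty$ as $y\to0^+$ or $y\to+\infty$.

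For part (i) I would compute $E''(1)$ using the large–argument expansions $\theta_3(t)=1+2e^{-\pi t}+O(e^{-4\pi t})$ and $\theta_4(t)=1-2e^{-\pi t}+O(e^{-4\pi t})$, valid since $V$ is large. Differentiating the two products at $y=1$ and keeping the leading contributions from the shortest vectors gives
\[
  E''(1) \ = \ 16\pi^2V^2\bigl(\beta^2 e^{-\pi\beta V}-\alpha^2 e^{-\pi\alpha V}\bigr)+(\text{exponentially smaller terms}).
\]
The $\theta_3$–term (repulsion, rate $\beta$) contributes positively and the $\theta_4$–term (charges, rate $\alpha$) negatively. Since $\beta>\alpha$, the factor $e^{-\pi\alpha V}$ decays strictly more slowly than $e^{-\pi\beta V}$, so for $V$ larger than some $V_1=V_1(\alpha,\beta)$ (essentially $V_1\approx\tfrac{2}{\pi(\beta-\alpha)}\log(\beta/\alpha)$) the negative term dominates and $E''(1)<0$. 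Thus $y=1$ is a strict local maximizer in $\mathcal{Q}(V)$, and in particular not a minimizer. The only genuine work here is bounding the higher–order theta tails to guarantee they do not spoil the sign of the leading bracket.

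For parts (ii) and (iii) I would run a perturbation argument around Montgomery's minimizer, treating $P_4$ as the perturbation of $P_3$. Writing $E_{f_1,\varepsilon f_2}^\pm=P_3+\varepsilon P_4+\mathrm{const}$ (with $\varepsilon=1$ in case (ii)), I first fix $\delta>0$ and use the strict coercive minimality of $P_3$ to get a uniform gap $m(\delta):=\inf_{|\log y|\ge\delta}\bigl(P_3(y)-P_3(1)\bigr)>0$; on this region $E(y)-E(1)\ge m(\delta)-2\sup_y|P_4|$ stays positive once the perturbation is small. On the complementary region $|\log y|<\delta$ I use $P_3''(1)>0$, continuity, and $E'(1)=0$ to obtain strict convexity $E''=P_3''+\varepsilon P_4''>0$, hence $E(y)>E(1)$ there as well; together this forces $y=1$ to be the unique global minimizer. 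In case (iii) the parameters $\alpha,\beta,V$ are fixed, $P_4$ and its first two derivatives are bounded on $(0,\infty)$, and smallness is bought directly from $\varepsilon\to0$. In case (ii) smallness instead comes from $\alpha\to0$: the modular relation $\theta_4(t)=t^{-1/2}\theta_2(1/t)$ yields $\sup_y P_4(y)\le\theta_4(\alpha V)\to0$ and, by the same identity, exponential smallness of $P_4'(\alpha V)$ and $P_4''(\alpha V)$, so that $\sup_{|\log y|<\delta}|P_4''|\to0$.

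The main obstacle is the quantitative control behind these two otherwise routine arguments. In (i) one must make the asymptotic expansion of $E''(1)$ rigorous, i.e.\ show the discarded exponentials are genuinely negligible relative to $V^2\alpha^2e^{-\pi\alpha V}$ uniformly for $V>V_1$. In (ii) the delicate point is that $\alpha\to0$ sends the \emph{arguments} of $\theta_4$ to $0$ rather than to a regime where naive series bounds apply, so the exponential decay of $P_4$ and its derivatives near $y=1$ must be extracted through the modular transformation; once that uniform-in-$y$ smallness is in hand, the perturbation of Montgomery's strict minimizer closes the argument.
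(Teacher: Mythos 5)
Your proposal is correct. For part (i) it is essentially the paper's own proof: the paper makes the same reduction of $\mathcal{Q}(V)$ to a one-parameter family $f_{\beta,V}(y)+g_{\alpha,V}(y)-2$ of theta products, and obtains $E''(1)<0$ for large $V$ from the ratio $f_{\beta,V}''(1)/(-g_{\alpha,V}''(1))=\tfrac{\beta^2}{\alpha^2}e^{-\pi V(\beta-\alpha)}+o(e^{-\pi V(\beta-\alpha)})$ via l'H\^opital --- precisely your dominance argument between $e^{-\pi\beta V}$ and $e^{-\pi\alpha V}$. (One small imprecision on your side: the corrections to the leading bracket are not all exponentially smaller --- e.g.\ the term of order $\alpha V e^{-\pi\alpha V}$ is only polynomially smaller than $\alpha^2V^2e^{-\pi\alpha V}$ --- but this does not affect the sign conclusion.) For parts (ii) and (iii) your skeleton (nondegenerate local convexity at $y=1$ plus a global gap argument away from it) matches the paper, but your execution of the global step is genuinely different and cleaner. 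The paper proves, for each lattice outside a neighborhood $I_\eta$ of the square, a pointwise threshold $\alpha_1(y)>0$, and then must establish $\inf_{y\notin I_\eta}\alpha_1(y)>0$ through continuity plus a growth comparison of $f_{\beta,V}$ against $g_{\alpha,V}$ as $y\to\infty$ --- a somewhat delicate compactness-and-asymptotics patch. You instead exploit the monotonicity of $\theta_4$ to get the uniform-in-$y$ bound $\sup_y P_4\le\theta_4(\alpha V)\to0$ as $\alpha\to0$, which, set against the $\alpha$-independent gap $m(\delta)$ of Montgomery's strict coercive minimizer $P_3$, settles the outer region in one line; your inner-region control of $P_4''$ via the modular relation $\theta_4(t)=t^{-1/2}\theta_2(1/t)$ plays the same role as the paper's appeal to boundedness and continuity of $\theta_4,\theta_4',\theta_4''$ on $[0,\infty)$. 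Likewise, for (iii) the paper merely defers to the infimum-of-ratio argument from its inverse-power-law theorem, whereas your direct perturbation argument (with smallness bought from $\varepsilon$ rather than $\alpha$) is self-contained. What your route buys is a uniform, quantitative treatment of the far-from-square regime; what the paper's buys is reuse of machinery already set up for the power-law case.
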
 
We believe that these results hold for any dimension, but for simplicity we
prefer to present the proof only in dimension $d=2$. Furthermore, we expect the
variation of $V\mapsto E_{f_1,f_2}^\pm[V^{\frac{1}{d}}L]$ to be similar for the
Gaussian case as presented in Thm.~\ref{thm-minenergyzetahomog} for the inverse
power law case. We have numerically checked this property for some values of the
parameters and different lattices (see
e.g. Fig.~\ref{fig_2d_rhombic}).

\subsection*{Numerical investigation} In the final Section \ref{sec-numerics} of the paper we complement the
analytical results with a numerical investigation on optimal lattices and charge
distributions for both the inverse power law and the Gaussian case. We have
  a rather complete picture, both analytically and numerically, of the solution
  of our minimization problem in two dimensions. In higher dimension, the
  numerics are more difficult to perform and we only compare the values of the
  rock-salt structures with other specific lattices which have the largest
  symmetry groups and which are also known as ``density stable" for radially
  symmetric interaction (see \cite{LBMorse}), i.e. the only possible lattices
  that are critical points in $\mathcal{L}(V)$ for $E_{f,0}$ in an open interval
  of volumes $V$. They are also the usual minimizers of $E_{f,0}$ in
  $\mathcal{L}$, and we expect these two properties to still hold in general
  for $E_{f_1,f_2}$. In particular, we consider the case of dimensions $2$,
  $3$ and $8$ (see \cite{CKMRV2Theta}).


  We first note that, for a stable system, the attractive interaction between different charges related to $f_2$ needs to have a higher decay compared to the purely repulsive interaction related to $f_1$. In our setting, this amounts to the assumption $p > q$ (respectively $\alpha > \beta$). If $p-q$ (respectively $\beta-\alpha$) is positive, but sufficiently small, then the rock-salt structure is not optimal and the energy does not have a minimizer. Examples are given in Fig.~\ref{fig_Gaussip_nonopt} and Fig.~\ref{fig_ipnonopt}. In the following, we hence consider the situation
  when $p - q$ and $\beta-\alpha$ are large enough. In this situation, in all
  considered cases, the rock-salt structure seems to be optimal. More precisely,
  we consider the following cases:

  \medskip

  \textit{Dimension $d =2$:} The minimizer among all orthorhombic lattices and
  all values of $V$ is a rock-salt structure. This is illustrated by a plot over the fundamental
    domain (see Fig.~\ref{fig_square_rect}). Among all lattices with
  alternating charges, the minimizer at fixed $V>0$ exhibits a phase transition
  of the type ``triangular - rhombic - square - rectangular" as $V$ increases
  (see Fig.~\ref{fig_transition}). This was already observed for Lennard-Jones
  type interaction \cite{Beterloc,SamajTravenecLJ}, Morse type interaction
  \cite{LBMorse}, 3-block copolymers \cite{LuoChenWei} and two-component
  Bose-Einstein Condensates \cite{Mueller:2002aa}. Furthermore, the global
  minimum of $E_{f_1,f_2}^\pm$ among all lattices and all values of $V$ is a
  rock-salt structure (see Fig.~\ref{fig_2d_rhombic} and
  Fig.~\ref{fig_2d_p4q3inD}). The same optimality holds when comparing the
  rock-salt structure to $(\mathsf{A}_2,\varphi_\tinyvarhexagon)$, the triangular lattice with its honeycomb-like, energetically minimal distribution of charges.

  \medskip
  
  \textit{Dimensions $d\in \{3,8\}$:}  For $d=3$, by comparing the (lowest) energy of the rock-salt structure to FCC lattices and BCC lattices with alternating and optimal charge distributions (using the general formula proved in Thm.~\ref{thm-minenergyzetahomog}), we find out that the minimal energy among these lattices is obtained by the rock-salt structure (see Table \ref{table-3d} and Fig.~\ref{fig_3d_special}).
  
  For $d = 8$, among lattices with alternate distribution of charges, the cubic lattice with lowest energy has a lower energy than the $\mathsf{E}_8$ lattice with lowest energy (see Thm.~\ref{thm-minenergyzetahomog} and Table \ref{table-8d24d}).

  \medskip
  
  These results suggest that the rock-salt structure is the most promising
  candidate for the presented energy minimization problem for $d \in \{2,3,8\}$,
  for, both, inverse power law and Gaussian potentials, if the distance between
  the parameters is large enough. More generally, this suggests the following
  conjecture:
\begin{Conjecture}[Minimality of the rock-salt structure]\label{con-conj1}
	Let $d\geq 1$. Then there exist $\delta_0, \delta_1 > 0$ (depending only on $d$) such that the global minimizer of $E_{f_1,f_2}$, defined by \eqref{eq-mainenergyintro}, is of the form $\big(\mathcal{V}^{\frac{1}{d}}\Z^d,\varphi_\pm\big)$ for some $\mathcal{V}>0$ if either one of the following conditions is satisfied:
	\begin{enumerate}
		\item $f_1$, $f_2$ are given by \eqref{eq-deff1f2invpower} for $p,q>d/2$ with $p-q > \delta_0$.
		\item $f_1$, $f_2$ are given by \eqref{eq-deff1f2Gaussians} with $\beta-\alpha>\delta_1$.
	\end{enumerate}
	Moreover, if condition (i) holds, then $\mathcal{V}=V_L$, where $V_L$ is given by \eqref{def:VL}.
\end{Conjecture}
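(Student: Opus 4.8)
The plan is to reduce the global minimization to a pure shape optimization by exploiting homogeneity, to analyze the resulting functional in the large-gap limit, and then to transfer the limiting optimality to finite gaps by compactness. I would begin with the inverse power law case (i), where homogeneity is available. For any pair $(L,\varphi)$ with $L\in\mathcal{L}(1)$ one has the splitting $E_{f_1,f_2}[V^{1/d}L,\varphi]=V^{-2p/d}\zeta_L(2p)+V^{-2q/d}B(L,\varphi)$, where $\zeta_L(2p)>0$ is the $\varphi$-independent repulsive term and $B(L,\varphi)$ is the charge-interaction coefficient at unit volume (so that $B=\zeta_L^\pm(2q)$ when $\varphi=\varphi_\pm$). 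Minimizing in $V$ exactly as in Theorem \ref{thm-minenergyzetahomog}, a minimizing density exists if and only if $B(L,\varphi)<0$, and the minimal value is a negative $(p,q)$-constant times $(-B(L,\varphi))^{p/(p-q)}\zeta_L(2p)^{-q/(p-q)}$. Thus problem (i) reduces to \emph{maximizing} this functional over all admissible pairs $(L,\varphi)$, and once the maximizer is shown to be $(\Z^d,\varphi_\pm)$ the identity $\mathcal V=V_L$ from \eqref{def:VL} follows automatically.

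I would then study the large-gap limit, say $q$ fixed and $p\to+\infty$. Writing $m(L)$ for the length of a shortest nonzero vector and $\kappa(L)$ for the kissing number, the estimate $\zeta_L(2p)\sim\kappa(L)\,m(L)^{-2p}$ yields $\tfrac{q}{p-q}\log\zeta_L(2p)\to-2q\log m(L)$, so that the logarithm of the functional above converges to $\log\!\big(-B(L,\varphi)\,m(L)^{2q}\big)$. Hence the limiting problem is to maximize $-m(L)^{2q}B(L,\varphi)$, a quantity whose leading contribution comes entirely from the shortest-vector shell and equals the excess of the number of oppositely charged over equally charged nearest-neighbour pairs. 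The rock-salt pair $(\Z^d,\varphi_\pm)$ is distinguished by the fact that its nearest-neighbour graph is bipartite, so that all $2d$ shortest vectors carry opposite charges and saturate this leading term; I would expect to prove that no competitor can do better, the densely packed candidates $\mathsf{A}_2$, FCC and $\mathsf{E}_8$ being penalized precisely because their shortest-vector graphs contain odd cycles and hence cannot be anti-aligned by any admissible charge.

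Finally I would upgrade this limiting optimality to all gaps $p-q>\delta_0$ by a compactness-and-continuity argument on a suitably compactified moduli space of unit-volume lattices, using Theorem \ref{thm-mainInvpowerlaws} both to control the optimal density and to exclude minimizers escaping to arbitrarily degenerate, skew lattices. The Gaussian case (ii) would run along the same lines with $\theta_L,\theta_L^\pm$ replacing $\zeta_L,\zeta_L^\pm$ and $\beta\to+\infty$ localizing the energy on the shortest shell. The hard part, and the reason this remains a conjecture, is twofold. First, the reduction from an arbitrary charge $\varphi$ to the alternate distribution $\varphi_\pm$ is known only for orthorhombic lattices \cite{BeterminKnuepfer-preprint}: on a general lattice the optimal charge need not be alternate — the triangular lattice prefers the honeycomb distribution $\varphi_\tinyvarhexagon$ — so the charge and shape optimizations cannot be decoupled and one must control the joint functional $B(L,\varphi)$ over a non-compact space. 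Second, even restricted to $\varphi_\pm$, proving that $\Z^d$ minimizes $\mathcal{E}^\pm_L$ over all $L\in\mathcal{L}(1)$, and not merely over orthorhombic lattices where the monotonicity $\theta_4(t)<1$ suffices, is a universal-type lattice optimization for which the Fourier-analytic machinery of \cite{CKMRV,CKMRV2Theta} is, outside $d\in\{8,24\}$, not presently available.
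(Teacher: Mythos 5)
This statement is Conjecture~\ref{con-conj1}: the paper deliberately proves nothing here, offering only numerical evidence (Section~\ref{sec-numerics}) plus rigorous partial results confined to orthorhombic lattices (Thms.~\ref{thm-mainInvpowerlaws}--\ref{thm-mainGaussians}), so there is no paper proof to compare against, and what you have written is a research program rather than a proof --- which, to your credit, you acknowledge. Your first step is sound and is exactly the paper's Theorem~\ref{thm-minenergyzetahomog} extended to general charges: the homogeneity splitting $E_{f_1,f_2}[V^{1/d}L,\varphi]=V^{-2p/d}\zeta_L(2p)+V^{-2q/d}B(L,\varphi)$, the existence criterion $B(L,\varphi)<0$, the reduction to maximizing $(-B(L,\varphi))^{p/(p-q)}\zeta_L(2p)^{-q/(p-q)}$, and the automatic identification $\mathcal{V}=V_L$ under condition (i) are all correct and consistent with \eqref{def:VL} and \eqref{eq-minenergyL}.

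The genuine gaps begin at your large-gap limit. First, the localization claim is false as stated: with $q$ fixed and $p\to\infty$, only the repulsive term $\zeta_L(2p)$ concentrates on the shortest shell, whereas $B(L,\varphi)$ retains \emph{every} shell with weight $|x|^{-2q}$. The limiting functional $-m(L)^{2q}B(L,\varphi)$ is therefore not a nearest-neighbour pair count, and the bipartiteness heuristic controls only the first shell; the honeycomb charge $\varphi_{\tinyvarhexagon}$ on $\mathsf{A}_2$, which strictly beats $\varphi_\pm$ there, shows that farther shells and non-alternate charges can conspire against the first-shell intuition. Maximizing $-m(L)^{2q}B(L,\varphi)$ jointly in $(L,\varphi)$ is essentially the open problem of optimizing $E_{0,f_2}$ over all lattices and charges --- indeed the paper cannot even establish the sign $\zeta_L^\pm(2q)<0$ outside $\mathcal{Q}$ (see the remark on negativity of alternate lattice sums). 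Second, your limit with $q$ fixed can at best yield a threshold $\delta_0=\delta_0(q,d)$, while the conjecture asserts $\delta_0$ depending only on $d$ and must also cover regimes where $p$ and $q$ are simultaneously large with bounded gap, which your asymptotics never see. Third, the compactness transfer has nothing to stand on: the configuration space is doubly non-compact, in the lattice shape \emph{and} in the period $N$ of the charge (the constraint in Def.~\ref{def-chargedlattice} fixes only the $\ell^2$-norm on $K_N(L)$, so continuity and coercivity must be uniform in $N$), and Theorem~\ref{thm-mainInvpowerlaws}, which you invoke to exclude escape to degenerate skew lattices, is proved only within $\mathcal{Q}(V)$; the sole statement the paper has for general $L$ is the criticality of $\Z^d$ (Thm.~\ref{thm-maincriticZd}), which is far weaker than what this step requires. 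These are precisely the obstructions you name at the end, and they are why the statement remains a conjecture.
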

Our calculations suggest that the critical values of $\delta_0,\delta_1$ in dimensions $d\in \{2,3\}$
are rather small. In general, these values should depend
on the dimension.  More generally, we believe that Conjecture \ref{con-conj1}
also holds for completely monotone potentials $f_1$, $f_2$ as long as $f_1-f_2$
is a one-well potential. This has also been conjectured in \cite{Beterloc} for
the ``one type of particles" problem in dimension $2$.

For $d=1$, two of the authors already derived similar crystallization results
for systems with alternating charge distribution and two types of interactions
\cite{Crystbinary1d}. In particular, the optimality of the one-dimensional
rock-salt structure has been rigorously shown there. Furthermore, for
$d=2$, Friedrich and Kreutz \cite{FrieKreutSquare} recently proved the
optimality of a subset of the rock-salt structure (i.e., a finite
crystallization result) for short-range interactions $f_1$, $f_2$ among charges
of the form $\pm 1$. 

\subsection{Setting}\label{sec-setting}
We will now clarify the notation and introduce the integral concepts of this
work. We next introduce the notion of a lattice (sometimes also called ``Bravais lattice"). We remark that, in this work, vectors are understood as row vectors.
\begin{definition}[Lattices] \label{def-lattice}
Let $d \geq 1$.
\begin{enumerate}
	\item A lattice in $\Rd$ is a set of the form
	\begin{align} \label{eq-lattice}
		L \ = \ \bigoplus_{i=1}^d \Z u_i \qquad \text{ for some basis $\{u_1,...,u_d\}$ of $\R^d$.  }
     \end{align}
     The set of all lattices in $\R^d$ is denoted by $\mathcal{L}$ and the
     subset of lattices with fixed volume $V = |\det(u_1,...,u_d)|$ is denoted
     by $\mathcal L(V)$. The inverse volume $V^{-1}$ is also called the
       density of the lattice.   
     \item An orthorhombic lattice is a lattice of the form \eqref{eq-lattice} which can be represented by an orthogonal basis. We denote the set of orthorhombic lattices by $\mathcal{Q}$ and write $\mathcal{Q}(V) := \mathcal{Q} \cap \mathcal{L}(V)$.
         
	\item The triangular lattice $\mathsf{A}_2\in \mathcal{L}(1)$ is defined by $\mathsf{A}_2:= \sqrt{\frac{2}{\sqrt{3}}} \left[\Z(1,0)\oplus \Z(\frac{1}{2}, \frac{\sqrt{3}}{2})\right]$.
  \end{enumerate}
	We will also write $\mathcal{A} \subset \R^d$ for the set of vectors $a = \{a_1,...,a_d\} \in (0,\infty)^d$ such that $\prod_{i=1}^d a_i=1$ and write $\mathbb{1} :=(1,1,...,1)\in \mathcal A$. For $a \in \mathcal A$, we use the notation $L_a \in \mathcal Q(1)$ for an orthorhombic lattice of the form \eqref{eq-lattice} with $u_i = a_i e_i$.
\end{definition}

 We note that
any lattice $L \in \mathcal{L}(V)$ can be written
as $L = V^{1/d} \widetilde{L}$ for some $\widetilde{L} \in
\mathcal{L}(1)$. In our notation we have $L_{\mathbb{1}}=\Z^d$. We
  remark that in the crystallographic literature, orthorhombic lattices are
usually defined by the fact that their unit cell is cuboidal and all the lengths
$|u_i|$, $1\leq i\leq d$ are different (see e.g.,
\cite[Sect.~4.2.2.4]{NaumannBook}). However, we include the situation where
some or all the lengths $|u_i|$ are the same. We note that the choice of the
basis is non-unique and that e.g., an orthorhombic lattice can be represented with a basis whose elements are not orthogonal.
\begin{remark}[Two-dimensional lattices]\label{rmk:2dlattices}
	We recall that any two-dimensional lattice $L \subset \R^2$ of unit volume can be written in the form
	\begin{align}
		L \ = \ L(x,y) \ := \ \tfrac{1}{\sqrt{y}} \left[\Z (1,0) \oplus \Z (x,y)\right] \qquad \text{ with } (x,y) \in \mathcal{D}.
	\end{align}
	The pair $(x,y)$ is uniquely determined in the (right half) fundamental domain $\mathcal{D} \subset \R^2$;
	\begin{align} \label{def:Dlattices}
		\mathcal{D} \ := \ \left\{(x,y)\in \left[0, \tfrac 12 \right]\times (0,\infty) \ : \ x^2+y^2\geq 1\right\}.
	\end{align}
	In this setting, the triangular lattice $\mathsf{A}_2$ is represented by $\left(1/2,\sqrt{3}/2 \right)$ and the square lattice $\Z^2$ by $(0,1)$. Furthermore, rhombic lattices are characterized by $x^2+y^2=1$, $x \in [0, \tfrac{1}{2})$ and $(\tfrac{1}{2}, y) \in \mathcal{D}$ with $y \geq \tfrac{\sqrt{3}}{2}$. The name originates from the fact that, after a possible change of basis, the spanning vectors have equal length. The rectangular lattices $L_a$, $a=(y,y^{-1})$, are represented by $(0,y) \in \mathcal{D}$, $y \geq 1$.
	\flushright{$\diamond$}
\end{remark}

We recall the notion of charged lattices as defined in
\cite[Def.~1.1]{BeterminKnuepfer-preprint} and based on Born's paper
\cite{Born-1921}.  In a charged lattice, each $p \in L$ is assigned a charge
$\varphi(p)$. Even though the term ``charge" originally refers to the ionic
compounds setting, and can be understood as ``electric charge", it is important
for the reader to keep in mind that any notion of signed ``weight" can replace it.
\begin{definition}[Charged lattices] \label{def-chargedlattice}
	Let $d \geq 1$. For $L \in \mathcal L$ of the form \eqref{eq-lattice} and $N \in \N$, $\Lambda_N(L)$ is the set of $N$--periodic charge distributions $\varphi:L\mapsto \R$  such that		
	\begin{enumerate}
		\item $\displaystyle \varphi(x+Nu_i) = \varphi(x)$ for all $x \in L$ and $i \in \{ 1, \ldots, d \}$
		\vspace{0.7ex}
		\item $\varphi(0)>0$\label{charge_unique}
		\vspace{0.7ex}
		\item $\displaystyle \sum_{x\in K_N(L)} \varphi(x)^2 = N^d$,\label{charge_N}
	\end{enumerate}
	where the $N$-periodicity cube of $L$ is defined by
	\begin{align}\label{def-KN}
		K_N(L) \ := \ \Big\{x=\sum_{i=1}^d m_i u_i\in  L \ : \  m_i\in \{0, \ldots, N-1\}   \text{ for all } i \in \{1, \ldots, d\} \Big\}.
	\end{align}
	A charged lattice in $\R^d$ is a pair $(L,\varphi)$ consisting of a
        lattice $L \in \mathcal{L}$ and a (periodic) charge distribution
        $\varphi \in \Lambda_N(L)$ for some $N \in \N$.

        The alternate charge distribution $\varphi_\pm\in \Lambda_2(L)$ for $p = \sum_{i=1}^d m_i u_i$ is defined by
	\begin{align}
		\varphi_\pm(p) =
		\begin{cases}
			+ 1, & \sum_{i=1}^d m_i \equiv 0 \mod 2\\
			-1, & else
		\end{cases}
		.
	\end{align}

	The optimal, honeycomb-like, charge distribution $\varphi_\tinyvarhexagon\in \Lambda_3(L)$ for two-dimensional lattices is defined by
	\begin{align}
		\varphi_{\tinyvarhexagon}(p) =
		\begin{cases}
			\sqrt{2}, & m_2-m_1 \equiv 0 \mod 3\\
          -1/\sqrt{2}, & else
		\end{cases}
		.
	\end{align}
\end{definition}

Assumption \ref{charge_unique} ensures the uniqueness of the optimal charge distribution, in particular, flipping all the charges of $\varphi_\pm$ does not have any effect. Assumption \ref{charge_N} is at the same time natural and technical. Indeed, we need to fix the total amount of charges on the periodicity cell, otherwise the problems under consideration do not have solutions. Also, it is widely used in the discrete Fourier method of \cite{BeterminKnuepfer-preprint}. In the definition of $\varphi_{\tinyvarhexagon}$, we corrected a typo in \cite[Thm.~2.6]{BeterminKnuepfer-preprint} where a factor $2$ is missing. For an illustration of $\varphi_\pm$ and $\varphi_\tinyvarhexagon$ see again Fig.~\ref{fig_opt_charge}.

Next, we introduce interaction potentials and the resulting lattice energies. Since our ``charges" are not necessarily ``electrical charges", they can interact through a potential which is not necessarily Coulombian. Also, for special choices of the interaction potential, we define special lattice functions.
\begin{definition}[Spaces of potentials]\label{def-potentials}
	Let $d\geq 1$. We say that $f\in \mathcal{F}$ if there exists a Borel measure $\mu_f$ on $(0,\infty)$ such that, for all $r>0$,
	\begin{align}
		f(r) \ = \ \int_0^\infty e^{-rt}d\mu_f(t)
	\end{align}
	and if $f(r)=O(r^{-s})$ as $r\to +\infty$, for some $s>d/2$.  If $\mu_f \geq 0$, we say that $f\in \mathcal{S}$.
\end{definition}

We note that $\mathcal{S}$ is just the class of completely monotone functions with sufficient algebraic decay so that the corresponding interaction potential $x \mapsto f(|x|^2)$ is integrable in $\R^d \backslash B_\eta(0)$ for arbitrary small $\eta>0$. Functions in $\mathcal{F}$ are those functions which can be written as the difference of two functions in $\mathcal{S}$. We recall the defining formula \eqref{eq-mainenergyintro} of the energy.
\begin{definition}[Energy]\label{def-energy}
	For any $f_1,f_2\in \mathcal{F}$, $N\in \N$ and $(L,\varphi)\in \mathcal{L}\times \Lambda_N(\mathcal{L})$, we define
	\begin{align}\label{eq-mainenergy}
		E_{f_1,f_2}[L,\varphi] \ := \ \sum_{p\in L\backslash \{0\}} f_1(|p|^2)+ \frac{1}{N^d}\sum_{p\in L\backslash \{0\}} \sum_{x\in K_N(L)} \varphi(x)\varphi(x+p)f_2(|p|^2).
	\end{align}
	Furthermore, for any $f_1,f_2\in \mathcal{F}$ and any lattice $L\in \mathcal{L}$, we define
	\begin{align}\label{eq-defaltenergy}
          E_{f_1,f_2}^\pm[L]\ := \ %
          E_{f_1,f_2}[L,\varphi_\pm] \ %
          = \ \sum_{p\in L\backslash \{0\}} f_1(|p|^2)+\sum_{p\in L\backslash \{0\}} \varphi_\pm(p) f_2(|p|^2).
	\end{align}
\end{definition}
Among all admissible interaction potentials, the inverse power laws
$r \mapsto r^{-s/2}$ and the exponential functions $r \mapsto e^{-\pi \alpha r}$
will play a special role in our study (see
Fig.~\ref{fig_intercation_function}). To the latter, we will also refer to as
the Gaussian case as we look at the potential function of squared distance,
i.e., $r \mapsto f(r^2)$, which then yields a Gaussian of the form $e^{- \pi \alpha r^2}$.

\begin{definition}[Epstein zeta functions and lattice theta functions] \label{def-epstein} \ 
	\begin{enumerate}
        \item The Epstein zeta function $\zeta_L$ and the alternating Epstein
          zeta function $\zeta_L^\pm$ of a lattice $L\in \mathcal{L}$ for
          $s>d$ are defined by
		\begin{align} \label{eq_alternate_EZ}
			\zeta_L(s) \ := \ \sum_{p\in L\backslash \{0\}} \frac{1}{|p|^s}, 
			\qquad \textnormal{ and } \qquad
			\zeta_L^\pm(s) \ := \ \sum_{p\in L\backslash \{0\}} \frac{\varphi_\pm(p)}{|p|^s}.
		\end{align}
		
              \item The lattice theta function $\theta_L$ and the alternating
                lattice theta function $\theta_L^\pm$ of $L\in \mathcal{L}$  for $\alpha > 0$ are
                 defined by
		\begin{align}\label{eq:theta}
                  \theta_L(\alpha) \ := \ %
                  \sum_{p\in L} e^{-\pi \alpha |p|^2}
			\qquad \textnormal{ and } \qquad
                  \theta_L^\pm(\alpha) \ %
                  := \ \sum_{p\in L} \varphi_\pm (p) e^{-\pi \alpha |p|^2}.
		\end{align}
		If $d=1$, these theta functions are the classical $\theta_3$- and $\theta_4$-function defined by
		\begin{align}\label{deftheta34}
			\theta_3(\alpha) \ := \ 
	\sum_{k\in \Z} e^{-\pi \alpha k^2}
			\qquad \textnormal{ and } \qquad 	
			\theta_4(\alpha) \ := \
	\sum_{k\in \Z} (-1)^k  e^{-\pi \alpha k^2}.
		\end{align}
	\end{enumerate}
\end{definition}

\begin{remark}[The non-absolutely summable case]\label{rmk-nonintegrablecase}
	The results of this paper still hold when $f_2$ is not assumed to decay sufficiently fast at infinity. In this case, a common way to define the second term of $E_{f_1,f_2}$ is to use the Ewald summation method, using, for example, a Gaussian convergence factor (see e.g., \cite{Ewald1,Ewaldpolytropic,SaffLongRange}). This method has been used by two of the authors in \cite{BeterminKnuepfer-preprint} and a definition of $E_{f_1,f_2}$ for such $f_2$ and $f_1\in \mathcal{S}$ would be 
\begin{align}
	E_{f_1,f_2}[L,\varphi] \ := \ \sum_{p\in L\backslash \{0\}} f_1(|p|^2)+ \lim_{\eta \to 0} \frac{1}{N^d}\sum_{p\in L\backslash \{0\}} \sum_{x\in K_N(L)} \varphi(x)\varphi(x+p)f_2(|p|^2)e^{-\eta |p|^2}.
\end{align}
This definition coincides with \eqref{eq-mainenergy} when $f_2\in \mathcal{S}$ and the optimality of the alternate charge distribution $\varphi_\pm$ for $\varphi\mapsto E_{f_1,f_2}[L,\varphi]$ when $L\in \mathcal{Q}$ still holds, if $\mu_{f_2}$ is a non-negative Borel measure. This is a consequence of the work done in \cite{BeterminKnuepfer-preprint}. By using the Ewald summation method, we can derive an expression of $E_{f_1,f_2}$ involving superpositions of Gaussians. The maximality of the cubic lattice for $E_{0,f_2}^\pm$ still holds, as well as the criticality of the cubic lattice proved in Thm.~\ref{thm-maincriticZd}. Since all the properties of the alternate Epstein zeta function -- in particular its homogeneity -- stay true for $s\leq d/2$, Thm.~\ref{thm-mainInvpowerlaws}, \ref{thm-local2dInvPowLaw} and \ref{thm-minenergyzetahomog} also hold in this case. Then, the analytic continuation of the general Epstein zeta function gives the same expression of the energy (again as a superposition of Gaussians).

This non-integrable case is of importance. In particular it is interesting in dimensions $d \geq 3$ for describing a Coulomb interaction $f_2(r^2)=r^{2-d}$ for charged particles. This is the commonly used potential for describing the interaction in ionic crystals where the charges are really understood as ``electric charges".
\flushright{$\diamond$}
\end{remark}

\section{Proof of theorems}
In this section, we present the proofs for the theorems stated in the previous section.

\subsection{Criticality of the $d$-dimensional rock-salt structure -- Proof of Thm.~\ref{thm-maincriticZd}}\label{subsec:design}

Thm.~\ref{thm-maincriticZd} states that the $d$-dimensional rock-salt structure
$(\Z^d, \varphi_\pm)$ (where $\varphi_\pm$ is defined in
Def.~\ref{def-chargedlattice}) is a critical point in the space of unit volume
charged lattices of the form
$\left\{(L,\varphi_\pm) :L\in \mathcal{L}(1)\right\}$. We prove
Thm.~\ref{thm-maincriticZd} by using a result on 2-designs from
\cite{Coulangeon:2010uq}.
	
\begin{proof}[Proof of Thm.~\ref{thm-maincriticZd}]
 By a scaling argument we may assume
  $V=1$. Denoting the sublattices corresponding to the positive, respectively
  negative charges by $L_\pm \subset \Z^d$, we have
  \begin{align}\label{eq-Zddesign}
    E_{f_1,f_2}^\pm[L] = \sum_{p\in L\backslash \{0\}} f_1(|p|^2)+ \sum_{p\in L_+\backslash\{0\}} f_2(|p|^2) - \sum_{p\in L_-} f_2(|p|^2).
  \end{align}
  We recall that, for given $r > 0$, a layer $D$ of a lattice is the set
  of points in the lattice with $p \in S_r:=\partial B_r(0)$. A layer is called
  a $t$-design, if for any polynomial $P : \R^d \to \R$ of degree up to
  $t \in \N$, we have
  \begin{align}
    \frac{1}{|S_r|}\int_{S_r} P(x_1,...,x_d)dx = 
    \frac{1}{\sharp D}\sum_{x\in D} P(x_1,...,x_d).
  \end{align}
  We use the following result from \cite[Thm.~4.4(1)]{Coulangeon:2010uq}: If
  every layer $D\subset S_r$ of a lattice $L\in \mathcal{L}(1)$ is a $2$-design,
  then $L$ is a critical point of $L \mapsto E_{f,0}$ in $\mathcal{L}(1)$.  By
  \cite[Sect.~4]{CoulLazzarini}, we know that all the layers of $\Z^d$ are
  $2$-designs.  For any $n \in \Z^d$, by construction the charge at the
    point $p=\sum_{i=1}^d n_i e_i$ of the rock-salt structure is $+1$ if and
    only if $\sum_{i=1}^d n_i\in 2\Z$. This is equivalent to the fact that
    $|n|^2 = (\sum_{i=1}^d n_i)^2 -2\sum_{i\neq j} n_i n_j \in 2\N$. This
    implies that all the points of $\Z^d$ at distance $|n|^2$ to the origin have
    the same charge.  Therefore, all layers of $L_+$ (resp.~$L_-$) are also
  2-designs.  Thus, by \cite[Thm.~4.4.(1)]{Coulangeon:2010uq}, $L_+$,
  respectively $L_-$, is a critical point of the second, respectively third,
  term of the energy \eqref{eq-Zddesign}. Since $\Z^d$ is also a critical point
  of the first term of the energy (by the same argument), the proof is complete.
\end{proof}
The following remark is used in the proof of our next result.


\begin{remark}[Strict optimality of the cubic lattice for completely monotone kernels]\label{subsec:optiZdCM}
  The optimality of cubic lattices among the smaller class of
  orthorhombic lattices has been studied in \cite[Thm.~2]{Mont} for the lattice
  theta function and in \cite[Thm.~4]{BeterminPetrache} and
  \cite[Thm.~2.2]{Faulhuber:2016aa} for the alternate lattice theta
  function. The key point is that, for all $L_a\in \mathcal{Q}(1)$,
  $a \in \mathcal{A}$,
\begin{align}\label{eq:producttheta}
	\theta_{L_a}(\alpha) = \sum_{p \in L_a} e^{-\pi \alpha |p|^2} = \prod_{i=1}^d \theta_3(a_i^2 \alpha), \quad
	\theta_{L_a}^\pm(\alpha) = \sum_{p \in L_a} \varphi_\pm(p) e^{-\pi \alpha |p|^2} = \prod_{i=1}^d \theta_4(a_i^2 \alpha).
\end{align}
Then, using these product representations, it has been proved in \cite{Mont,BeterminPetrache,Faulhuber:2016aa} that, for all
$\alpha>0$, $\Z^d$ is the unique (strict) minimizer (resp. maximizer) of
$L\mapsto \theta_L(\alpha)$ (resp.~$L\mapsto \theta_L^\pm(\alpha)$) in
$\mathcal{Q}(1)$.  An important consequence is that the same result holds when
$f\in \mathcal{S}$, for the lattice energies
\begin{align}\label{eq-Egintegral}
		E_{f,0}^\pm[L_a] \ = \ \sum_{p\in L_a\backslash \{0\}} f(|p|^2)=\int_0^\infty \left(\prod_{i=1}^d \theta_3\left(\frac{a_i^2  t}{\pi}\right)-1\right)d\mu_f(t),
	\end{align}
	and
	\begin{align}\label{eq-Ehpmintegral}
		E^\pm_{0,f}[L_a] \ =\ \sum_{p\in L_a\backslash \{0\}}\varphi_{\pm}(p) f(|p|^2) \ = \ \int_0^\infty \left(\prod_{i=1}^d \theta_4\left(\frac{a_i^2  t}{\pi}\right)-1\right)d\mu_f(t),
	\end{align}
	where $\Z^d$ minimizes \eqref{eq-Egintegral} (resp.~maximizes \eqref{eq-Ehpmintegral}).  Furthermore, the Hessian of $E_{f,0}^\pm$ (resp. $E_{0,f}^\pm$) at $\Z^d$ is positive (resp. negative) definite. This result was already shown for the Epstein zeta function by Lim and Teo \cite{LimTeo} in the case of one type of particles \eqref{eq-Egintegral}.
	\flushright{$\diamond$}
\end{remark}

\subsection{The inverse power law case -- Proofs of Thm.~\ref{thm-mainInvpowerlaws}, \ref{thm-local2dInvPowLaw} and \ref{thm-minenergyzetahomog}}\label{sec-invpowerlaws}
In this section, we consider potentials of the form
\eqref{eq-deff1f2invpower}, i.e.
  \begin{align}
    f_1(r) \ = \ \frac{1}{r^p} \qquad \text{ and } \qquad %
     f_2(r) \ = \ \frac{1}{r^q}, \quad\text{where $p>q>\frac{d}{2}$},
  \end{align}
  when the energy can be
    written in terms of Epstein zeta functions, i.e.
    \begin{align}\label{eq:introalternatezeta}
      E_{f_1,f_2}^\pm[L] \ = \ \zeta_L(2p) + \zeta_L^\pm(2q), \qquad L\in \mathcal{L}.
    \end{align}
    Employing the homogeneity of the potentials, $E_{f_1,f_2}^\pm$ can then
    be expressed in the form
\begin{align}\label{eq-FVaPowerLaw}
  E_{f_1,f_2}^\pm[V^{\frac{1}{d}}L] \ = \ %
  \frac{1}{V^{\frac{2p}{d}}} \left(\zeta_{L}(2p) +V^{\frac{2(p-q)}{d}}\zeta_{L}^\pm(2q) \right),\quad L\in \mathcal{L}(1), V>0.
      \end{align}
      Thm.~\ref{thm-mainInvpowerlaws} states the minimality of the cubic
        lattice at high density and its non-minimality at low density. Indeed,
        formula \eqref{eq-FVaPowerLaw} suggests that for $V$ sufficiently
      small, $L\mapsto E_{f_1,f_2}^\pm[V^{\frac{1}{d}}L]$ should be minimized by
      the minimizer of $L\mapsto \zeta_L(2p)$. For $V$ very large,
      $E_{f_1,f_2}^\pm[V^{\frac{1}{d}}L]$ should not be minimized by the same
      lattice, since $L\mapsto \zeta_L^\pm(2q)$ does not have a minimizer as a
      simple consequence of \cite[Prop.~1.3]{BeterminPetrache} by using the
      Poisson Summation Formula (see, e.g., \cite[Cor.~2.4]{SteinWeiss}). In the
      following proof, we give a corresponding, rigorous argument. 

\begin{proof}[Proof of Thm.~\ref{thm-mainInvpowerlaws}]
As recalled at the beginning of Section 2, we already know (see \cite{BeterminKnuepfer-preprint}) that the alternate distribution of charges $\varphi_\pm$ is the unique minimizer of $\varphi\mapsto E_{f_1,f_2}[L,\varphi]$ in $\mathcal{Q}(V)$, for any $V>0$ and any $f_1,f_2\in \mathcal{S}$. It is thus sufficient to find the minimizer of $E_{f_1,f_2}^\pm$ in $\mathcal{Q}(V)$.  We first consider the case of small $V$, where we claim that the lattice
  $V^{\frac{1}{d}}\Z^d$ is energetically optimal. We note that, for any $V>0$
  and any $L\in \mathcal{Q}(1)$ we have
	\begin{align}
          E_{f_1,f_2}^\pm[V^{\frac{1}{d}}L] \ %
          \stackrel{\eqref{eq-FVaPowerLaw}}= \ %
          \frac{1}{V^{\frac{2p}{d}}}  E_{f_1,\varepsilon f_2}^{\pm}[L],
          \qquad \qquad%
          \text{where $\varepsilon \ := \ V^{\frac{2(p-q)}{d}}$}.
	\end{align}
Let us define, for any $n\in \Z^d$ and any $a\in \mathcal{A}$, $R_{a}^2:= \sum_{j=1}^d n_j^2 a_j^2$ (we voluntarily omit the $n$-dependence) and 
\begin{align} \label{f2-con} %
		\gamma(a) \ := \ \Big( \sum_{n\in \Z^d\backslash \{0\}}(f_1(R_{a}^2)-f_1(R_{\mathbb 1}^2)) \Big) \Big/
		\Big( \sum_{n\in \Z^d\backslash \{0\}}(-1)^{|n|}  (f_2(R_{\mathbb 1}^2)-f_2(R_{a}^2) ) \Big).
	\end{align}
We now remark that, for all $\mathcal{A}\ni a\neq \mathbb{1}$ and any $L_a\in \mathcal{Q}(1)$, $E_{f_1,\varepsilon f_2}^{\pm}[L]>E_{f_1,\varepsilon f_2}^{\pm}[\Z^d]$ i.e.
    \begin{align}
      \sum_{n\in \Z^d\backslash \{0\}}\left(f_1(R_{a}^2)+\varepsilon(-1)^{|n|}f_2(R_{a}^2)\right)>\sum_{n\in \Z^d\backslash \{0\}}\left(f_1(R_{\mathbb 1}^2)+\varepsilon(-1)^{|n|}f_2(R_{\mathbb 1}^2)\right),
    \end{align}
  if and only if
    $\varepsilon < \varepsilon_0:=\inf_{\stackrel{a\in \mathcal{A}}{a\neq
        \mathbb{1}}} \gamma(a)$, the sign of the inequality being ensured by the strict minimality
        (resp. maximality) of $a=\mathbb{1}$ for $E_{f_1,0}^\pm$
        (resp. $E_{0,f_2}^\pm$) as explained in Rmk.~\ref{subsec:optiZdCM}. It
      remains to show that $\varepsilon_0>0$. For this, we note that by
      l'Hôpital's rule and the results stated in Rmk.~\ref{subsec:optiZdCM},
      we have
	\begin{align} 
          \eta \ := \  \inf_{\mathcal{A}\backslash \{ \mathbb{1}\}  \ni a^{(k)} \to \mathbb{1}}\liminf_{k \to \infty}\gamma(a^{(k)}) \ >  \ 0,
	\end{align}
        since the first derivatives with respect to $a$ at $a=\mathbb 1$ of,
        both, numerator and denominator in \eqref{f2-con} vanish and the second
        derivatives are strictly positive as a straightforward calculation
        shows. Since $\mathbb{1}$ is the unique zero of both numerator and
        denominator, there exists $\delta > 0$ such that
        $\gamma(a) > C_1(\delta) >0$ for some $C_1(\delta)$ whenever
        $|a - \mathbb 1| < \delta$. By the minimality of the square lattice and
        the continuity of $a \mapsto \gamma(a)$, we have
        $\gamma(a) > C_2(\delta,R)$ for any $R > 0$ and any $a \in \mathcal{A}$
        with $|a- \mathbb 1| \in (\delta,R)$. Since $f_2(r)=o(f_1(r))$ as
        $r\to 0$ and both functions go to $0$ at infinity, it follows that
        $\liminf_{R \to \infty} \inf_{|a-\mathbb 1| > R} \gamma(a) >
        0$.  Therefore, there exists
        $\varepsilon_0 > 0$ such that for all $0\leq \varepsilon<\varepsilon_0$,
        $\Z^d$ is the unique minimizer of
        $L\mapsto E_{f_1,\varepsilon f_2}^{\pm}[L]$ in
        $\mathcal{Q}(1)$. Expressed differently, there exists $V_0$ such that
        for $V<V_0$ we have that $V^{\frac{1}{d}} \Z^d$ is the unique minimizer
        of $E_{f_1,f_2}^{\pm}$ in $\mathcal{Q}(V)$.

        \medskip
        
	We will now prove the non-minimality of the cubic lattice at low density, i.e., for large $V$. By \eqref{eq-FVaPowerLaw}, we have, for any orthorhombic lattice $L_a\in \mathcal{Q}(1)$, $a\in \mathcal{A}$, and any $V>0$,
	\begin{align}
          V^{\frac{2p}{d}}\left[E_{f_1,f_2}^\pm[V^{\frac{1}{d}}L_a]-E_{f_1,f_2}^\pm[V^{\frac{1}{d}}\Z^d]  \right] \ %
          = \ \zeta_{L_a}(2p)-\zeta_{\Z^d}(2p)+V^{\frac{2(p-q)}{d}}\left(\zeta^\pm_{L_a}(2q)-\zeta^\pm_{\Z^d}(2q)\right).
	\end{align}
	By the
          optimality of $\Z^d$ for completely monotone kernels (see
          Rmk.~\ref{subsec:optiZdCM}), we know that for all $a\in \mathcal{A}\backslash \{\mathbb{1}\}$,
          $ \zeta_{L_a}(2p)-\zeta_{\Z^d}(2p)> 0$ and
          $\zeta^\pm_{L_a}(2q)-\zeta^\pm_{\Z^d}(2q)< 0$. It follows that $E_{f_1,f_2}^\pm[V^{\frac{1}{d}}L_a]-E_{f_1,f_2}^\pm[V^{\frac{1}{d}}\Z^d]\geq 0$ for any $a\in \mathcal{A}\backslash \{\mathbb{1}\}$ if and only if
	$$
	V\leq V_1:=\inf_{a\in \mathcal{A}\backslash \{\mathbb{1}\}} \overline{\gamma}(a),\qquad\qquad %
        \textnormal{where}\quad \overline{\gamma}(a):= \left(\frac{\zeta_{L_a}(2p)-\zeta_{\Z^d}(2p)}{\zeta^\pm_{\Z^d}(2q)-\zeta^\pm_{L_a}(2q)}\right)^{\frac{d}{2(p-q)}}.
	$$
	As explained in the previous proof, we know by l'Hôpital's rule that the
        above quotient $\overline{\gamma}(a)$ has a strictly positive limit as
        $a\to \mathbb{1}$ in $\mathcal{A}$. Furthermore, we notice that
        
        $$
        \liminf_{R\to \infty} \inf_{|a-\mathbb{1}|>R}\overline{\gamma}(a)>0
        $$ 
        by a simple application of the integral representations \eqref{eq-Egintegral}-\eqref{eq-Ehpmintegral} and the fact that 
        $$
        \liminf_{R\to \infty} \inf_{|a-\mathbb{1}|>R} \theta_{L_a}(t)/\theta^\pm_{L_a}(t)>0,\quad \forall t>0,
        $$ 
        as a straightforward application of the product representations \eqref{eq:producttheta} and the Poisson summation formula.
        Therefore, it follows by continuity of $a\mapsto \overline{\gamma}(a)$ in
        $\mathcal{A}\backslash \{\mathbb{1}\}$ that $V_1$ exists and is finite.
	\end{proof}

We will now give the proof of Thm.~\ref{thm-local2dInvPowLaw}.

\begin{proof}[Proof of Thm.~\ref{thm-local2dInvPowLaw}]
  Let $p > q > 1$. We recall that any orthorhombic (i.e., rectangular) lattice
  $L_a \subset \R^2$ in two dimensions can be written in the form
  $L_a=\Z\left(y,0 \right) \oplus \Z\left( 0,y^{-1} \right)$ for $y > 0$
  (cf. Rmk.~\ref{rmk:2dlattices}). By Thm.~\ref{thm-maincriticZd}, $V^{\frac{1}{2}} \Z^2$ is a critical point of $E_{f_1,f_2}^\pm$ in $\mathcal{Q}(V)$. For $a = (y^{-1}, y)$, let $L_a\in \mathcal{Q}(1)$. For fixed $V>0$, we compute the second derivative of $y \mapsto E_{f_1,f_2}^\pm[V^{\frac{1}{2}}L_a]$ and evaluate it at $y=1$. By using the homogeneity of $f_1,f_2$ we get (see also \cite{Beterloc})
	\begin{align}
	\frac{d^2}{dy^2} E_{f_1,f_2}^\pm[V^{\frac{1}{2}}L_a] \Bigg|_{y=1}=\frac{1}{V^p}S_1(p)+\frac{1}{V^q}S_2(q) \ =  \ \frac{1}{V^q}\left(\frac{1}{V^{p-q}}S_1(p)+S_2(q)  \right),
	\end{align}
   where, using the fact that $\displaystyle \sum_{(m,n) \in \Z^2\backslash \{0\}}\frac{m^4}{(m^2+n^2)^{p+2}}=\sum_{(m,n) \in \Z^2\backslash \{0\}}\frac{n^4}{(m^2+n^2)^{p+2}}$,
	\begin{align}
          & S_1(p) \ = \ 4p\sum_{(m,n) \in \Z^2\backslash \{0\}} \frac 1{{(m^2+n^2)^{p+2}}} \left( (p+1)(n^2-m^2)^2- 2m^4 -2m^2n^2 \right),\\
          & S_2(q)\ = \ 4q\sum_{(m,n) \in \Z^2\backslash \{0\}} \frac{(-1)^{m+n}}{(m^2+n^2)^{q+2}} \left( (q+1)(n^2-m^2)^2 - 2m^4  - 2m^2 n^2 \right).
	\end{align}
	We note that $S_1(p)>0$ and $S_2(q)<0$. This is due to the fact that
        $\theta_3 > 1$ and $\theta_4 < 1$ and by expressing $S_1$ and $S_2$ in
        terms of theta functions by \eqref{eq-Egintegral} and
        \eqref{eq-Ehpmintegral}, respectively. The statements of the theorem
        then follow with the definition
        \begin{align}\label{def:Vpq2d}
          V_{p,q} \ := \ \left( \frac{ S_1(p)}{-S_2(q)}\right)^{\frac{1}{p-q}}.
    \end{align}
\end{proof}
We note that $V_{p,q}$ is explicit and easily computable.  To conclude the
inverse power law results, we give the proof of
Thm.~\ref{thm-minenergyzetahomog}:
\begin{proof}[Proof of Thm.~\ref{thm-minenergyzetahomog}]
	We need to compute the minimum of our energy model among all the possible volumes, for a given $L\in \mathcal{L}(1)$, defined by
	\begin{align}
          \displaystyle V\mapsto  E_{f_1,f_2}^\pm[V^{\frac{1}{d}}L] \ %
          = \ \frac{\zeta_L(2p)}{V^{\frac{2p}{d}}}+\frac{\zeta_L^\pm(2q)}{V^{\frac{2q}{d}}}.
	\end{align}
	The aim is to compare the minimal energies among the dilates of $L$ for different lattices. This method has already been used in \cite[Section 5]{OptinonCM} for comparing the Lennard-Jones type energies of different lattices. We define
	\begin{align}\label{def:VL}
		V_L \ := \ \left(\frac{p \, \zeta_L(2p)}{-q \, \zeta_L^\pm(2q)}  \right)^{\frac{d}{2(p-q)}}.
	\end{align}
	
	The $\zeta_L^\pm(2q)\geq 0$ case is trivial. For the other case, we easily obtain
	\begin{align}
	  	\dfrac{\partial}{\partial V} E_{f_1,f_2}^\pm[V^{\frac{1}{d}}L] \ %
		= \ \dfrac{\partial}{\partial V} \left( \frac{\zeta_L(2p)}{V^{\frac{2p}{d}}}+\frac{\zeta_L^\pm(2q)}{V^{\frac{2q}{d}}} \right)
		= - \frac{2p \, \zeta_L(2p)}{d \, V^{\frac{2p}{d}+1}}-\frac{2q \, \zeta_L^\pm(2q)}{d \, V^{\frac{2q}{d}+1}}.
	\end{align}
	It follows that, since $\zeta_L^\pm(2q)<0$, 
	\begin{align}
	  	\dfrac{\partial}{\partial V} E_{f_1,f_2}^\pm[V^{\frac{1}{d}}L] \ \geq \ 0 \iff V \ \geq \ V_L.
	\end{align}
	For given $L$, it is therefore easy to check that the minimal energy is given by $\mathcal{E}_L^\pm$ as in \eqref{eq-minenergyL}.
\end{proof}

\subsection{The Gaussian case -- Proof of Thm.~\ref{thm-mainGaussians}}\label{sec-Gaussian}
In this section, we assume that $r \mapsto f_1(r^2)$ and
$r \mapsto  f_2(r^2)$ are Gaussian functions of the form
\begin{align}\label{eq_Gaussian}
	f_1(r^2) \ = \ e^{-\pi \beta r^2},\quad f_2(r^2) \ = \ e^{-\pi \alpha r^2}, \qquad \beta > \alpha >0.
\end{align}
In this case, the energy can be expressed in terms of the lattice theta
function $\theta_L$ and the alternate lattice theta function $\theta^\pm_L$ (see
Def.~\ref{def-epstein}) by
\begin{align}
	E_{f_1,f_2}^\pm[L] \ := \ \theta_L(\beta) +\theta_L^\pm(\alpha)-2.
\end{align}
Subtracting 2 in the above model comes from the fact that we exclude the origin
from the summation in $E_{f_1,f_2}^\pm$, but it is included in the definition of
the theta functions.  By rescaling lengths, it is enough to consider the
lattices of unit volume. The energy is then given, for any
$L_a\in \mathcal{Q}(1)$, $a\in \mathcal{A}$ and $V>0$, by
\begin{align}
	E_{f_1,f_2}^\pm[V^{\frac{1}{d}}L_a] \ 
	&= \ \prod_{i=1}^d \theta_3 \big( a_i^2 V^{\frac{2}{d}}\beta \big) + \prod_{i=1}^d \theta_4 \big( a_i^2 V^{\frac{2}{d}} \alpha \big) - 2 \
	=  \ \theta_{L_a} \big( V^{\frac{2}{d}} \beta \big) + \theta_{L_a}^\pm \big( V^{\frac{2}{d}} \alpha \big) - 2.
\end{align}
We will now prove Thm.~\ref{thm-mainGaussians}. The goal is to determine
(non-sharp) ranges for the parameters, such that the rock-salt structure either
minimizes the energy model $E_{f_1,f_2}^\pm$ or such that it is a local
maximizer of it. The main ingredient is the asymptotic behavior of Jacobi's
theta functions.

\begin{proof}[Proof of Thm.~\ref{thm-mainGaussians}]
  (i): We start to prove the non-minimality of the cubic lattice for
  $E_{f_1,f_2}^\pm$ when $V$ is large enough, in dimension 2. Let
  $L_a \in \mathcal{Q}(1)$, $a = (y,y^{-1})$, $y\geq 1$. It is convenient to
  write the energy in the form
	\begin{align}	
		E_{f_1,f_2}^\pm[V^{\frac{1}{2}}L_a] \ = \ f_{\beta,V}(y)+g_{\alpha,V}(y)-2,
	\end{align}
	where
	\begin{align}
          f_{\beta,V}(y) \ := \ %
          \theta_3(\beta V y)\theta_3\left(  \frac{\beta V}{y}\right), %
          \qquad %
          g_{\alpha,V}(y) \ := \ \theta_4(\alpha V y)\theta_4\left(\frac{\alpha V}{y} \right).
	\end{align}
	We note that, using Remark \ref{subsec:optiZdCM} which guarantees the fact that $g_{\alpha,V}(1)>g_{\alpha,V}(y)$ and $f_{\beta,V}(y)>f_{\beta,V}(1)$ for all $y>1$,
	\begin{align}\label{eq-equivFV}
		E_{f_1,f_2}^\pm[V^{\frac{1}{2}}L_a] \ > \ E_{f_1,f_2}^\pm[V^{\frac{1}{2}}\Z^2] \quad \text{$\forall L_a \in \mathcal Q(1)$}
		\quad \iff \quad 
		\inf_{y \neq 1} \frac{f_{\beta,V}(y)-f_{\beta,V}(1)}{g_{\alpha,V} (1)-g_{\alpha,V}(y)} > 1.
	\end{align}
	In particular, for $y \to 1$, by l'Hôpital's rule we get
	\begin{align}\label{eq_Gauss_asymptote}
          \lim_{y \to 1} \frac{f_{\beta,V}(y)-f_{\beta,V}(1)}{g_{\alpha,V} (1)-g_{\alpha,V}(y)}\ 
          = \ \frac{f_{\beta,V}''(1)}{-g_{\alpha,V}''(1)} \ 
          =  \ \frac{\beta^2}{\alpha^2} \, e^{-\pi V(\beta-\alpha)} +  o(e^{-\pi V(\beta-\alpha)})
	\end{align}
       for large $V \gg 1$. The asymptotic result in \eqref{eq_Gauss_asymptote} follows by a straightforward calculation by computing and estimating the derivatives of $\theta_3$ and $\theta_4$. Since $\beta-\alpha>0$, it follows that there exists
        $V_1$ such that
	\begin{align}
		\lim_{y \to 1} \frac{f_{\beta,V}(y)-f_{\beta,V}(1)}{g_{\alpha,V} (1)-g_{\alpha,V}(y)} \ < \ 1 \qquad \text{for all $V>V_1$.}
	\end{align}
	Combining \eqref{eq-equivFV}-\eqref{eq_Gauss_asymptote} and the fact
        that $f_{\beta,V}''(1)>0$ and $-g_{\alpha,V}''(1)>0$ (as a consequence of
        Rmk.~\ref{subsec:optiZdCM}), it follows that
        $\frac{d^2}{d y^2}E_{f_1,f_2}^\pm[V^{\frac{1}{2}}L_a]
        \big|_{y=1}=f_{\beta,V}''(1)+g_{\alpha,V}''(1) < 0$ for $V > V_1$. Therefore, the square lattice
          $V^{\frac 12} \Z^2$ is a strict local maximizer of $E_{f_1,f_2}^\pm$
          in $\mathcal{Q}(V)$ for all $V>V_1$.

	\medskip
        
        \textit{(ii):} We next prove the strict minimality of the cubic lattice
        for $E_{f_1,f_2}^\pm$ when $\alpha$ is small enough. We first remark that, for any $L_a\in \mathcal{Q}(1)$ where
        $a=(y,y^{-1})$, $y>0$, we have
	\begin{align}
          \frac{d^2}{d y^2}E_{f_1,f_2}^\pm[V^{\frac{1}{2}}L_a] \Big|_{y=1} \ %
          = \ \frac{d^2}{d y^2}\left[  \theta_3(\beta V y)\theta_3\left(\frac{\beta V}{y}  \right)\right]_{y=1} %
          + \frac{d^2}{d y^2}\left[  \theta_4(\alpha V y)\theta_4\left(\frac{\alpha V}{y}  \right)\right]_{y=1}.
	\end{align}
	The first term is strictly positive for any fixed $\beta$ and $V$ by the
        strict minimality of the cubic lattice at $y=1$ (see Rmk.~\ref{subsec:optiZdCM}). The second term converges to $0$ as
        $\alpha \to 0$. Indeed,
        \begin{align}
          \frac{d^2}{d y^2}\left[  \theta_4(\alpha V y)\theta_4\left(\frac{\alpha V}{y}  \right)\right]_{y=1} %
          = 2\alpha^2 V^2 \theta_4(\alpha V) \theta_4''(\alpha V)-2\alpha^2 V^2 \theta_4'(\alpha V)^2+2\alpha V \theta_4(\alpha V) \theta_4'(\alpha V).
        \end{align}
        The convergence to zero is then a simple consequence of the fact the
        $\theta_4$ function and its first two derivatives are bounded and
        continuous on $[0,\infty)$ (see e.g.~\cite{WhiWat69}). Hence, $y=1$ is a
        strict local minimum of $a\mapsto E_{f_1,f_2}^\pm[V^{\frac{1}{2}}L_a]$
        for all $\alpha<\alpha_0$ where $\alpha_0 = \alpha_0(\beta,V)$ is small
        enough. Furthermore, for any orthorhombic lattice $L_a$, there exists
        $\alpha_1=\alpha_1(a, \beta, V)>0$ such that
\begin{align}\label{eq:minorthoargument}
  E_{f_1,f_2}^\pm[V^{\frac{1}{2}}L_a]-E_{f_1,f_2}^\pm[V^{\frac{1}{2}}\Z^2] \ > \ 0 \qquad \forall \alpha \ < \ \alpha_1.
\end{align}
This is a direct consequence of the fact that the second term of the energy goes
to $0$ as $\alpha\to 0$ when $a$ and $V$ are fixed.

\medskip
    
Let $\alpha_0$ be such that $y=1$ is a strict local minimizer of
  $a\mapsto E_{f_1,f_2}^\pm[V^{\frac{1}{2}}L_a]$. Therefore, there exists
  $\eta > 0$ such that
  $ E_{f_1,f_2}^\pm[V^{\frac{1}{2}}L_a]-E_{f_1,f_2}^\pm[V^{\frac{1}{2}}\Z^2] >
  0$ for all $a=(y,y^{-1})$, $y\in I_\eta :=(1-\eta,1+\eta)\backslash \{1\}$. Furthermore, for any $a=(y,y^{-1})$
  where $y \in  \R_+\backslash I_\eta \cup \{1\}$, there exists
  $\alpha_1(y)$ such that \eqref{eq:minorthoargument} holds. By continuity of
  $\alpha\mapsto E_{f_1,f_2}^\pm[L]$ for any given $L\in \mathcal{L}$ and
  $\beta>0$, the fact that $\alpha_1(1+\eta)>0$ and $\alpha_1(1-\eta)>0$ and since
  \begin{align}
    \lim_{\alpha\to 0} E_{f_1,f_2}^\pm[V^{\frac{1}{2}}L_a]=\theta_3(\beta V y)\theta_3\left(\frac{\beta V}{y}  \right)=f_{\beta,V}(y)
  \end{align}
        is a strictly increasing function of $y$ for any $V,\beta>0$  which grows faster than $g_{\alpha,V}$ in $[y_0,\infty)$ for large $y_0>1+\eta$ (i.e. on $(0,y_0)$ for small $y_0$ by symmetry) and for $\alpha<\beta$ (see
        e.g. \cite{Faulhuber:2016aa}), it follows that
        $\alpha_2:=\inf_{y\in \R_+\backslash I_\eta\cup \{1\}}\alpha_1(y)>0$. Therefore,
        $E_{f_1,f_2}^\pm[V^{\frac{1}{2}}L_a]-E_{f_1,f_2}^\pm[V^{\frac{1}{2}}\Z^2]
        > 0$ for all $\alpha<\min \{\alpha_0,\alpha_2\}$ and all
        $a\in \mathcal{A}$. This concludes the proof of (ii).
        
        \medskip
        
	\textit{(iii):} The last point of the theorem is shown as in the
          proof of Thm.~\ref{thm-mainInvpowerlaws} for the special choice of
          Gaussian potentials $f_1(r^2) = e^{-\pi \beta r^2}$ and
          $f_2(r^2) = e^{-\pi \alpha r^2}$. We will omit the proof as it is more
          or less a repetition of the proof of the high density result of
          Thm.~\ref{thm-mainInvpowerlaws} and the facts that
          $\theta_3(t)\sim 1$ and $\theta_2(t):=t^{-1/2}\theta_4(t^{-1})\sim 0$ as $t\to +\infty$, after using the Poisson Summation formula. 
\end{proof}
We remark that for fixed $\beta > \alpha$, it is also possible to numerically
compute an upper bound for $V_1$ by determining
$\widetilde{V}_1:=\inf \{ V>0 : - f_{\beta,V}''(1) (g_{\alpha,V}''(1))^{-1}
<1\}$.
\begin{remark}[Connection with two-component Bose-Einstein Condensates]\label{rmk-BEC}
	Using the Poisson summation formula (see, e.g., \cite[Cor. 2.6]{SteinWeiss}), we can show that, for any $V > 0$ and $0 < \alpha < \beta$,  setting $t=\frac{\beta}{\alpha}$ and $s=\frac{1}{V^{\frac{2}{d}} \, \beta}$, we have
	\begin{align}\label{eq:energygaussian2}
			E_{f_1,f_2}^\pm[V^{\frac{1}{d}}L_a] \ = \ \frac{1}{s^{\frac{d}{2}}}\left\{\theta_{L_a}(s)+t^{\frac{d}{2}}\theta_{L_a+\frac{a}{2}}( ts)\right\}-2,
	\end{align}
	for all $L_a\in \mathcal{Q}(1)$, $a\in \mathcal{A}$. Therefore, the
        problem of minimizing $E_{f_1,f_2}^\pm$ can be related to the following
        two-component Bose-Einstein Condensates minimization problem originally
        described by Mueller and Ho in \cite{Mueller:2002aa} (see also
        \cite{ReviewvorticesBEC} for a review). They consider two 2-dimensional
        lattices of the same kind, shifted by a vector $z \in \R^2$ and then
        look for the minimizer in $\mathcal{L}(1)\times \R^2$ of the following
        energy
        $E_\delta(L,z) := \theta_L(1) + \delta \, \theta_{L+z}(1),\quad 0\leq
        \delta \leq 1$. Numerically, they observed that, as $\delta$ increases
        from $0$ to $1$, the minimizer $(L_\delta,z_\delta)$ of $E_\delta$ in
        $\mathcal{L}(1)\times \R^2$ exhibits a transition
        $(\mathsf{A}_2,b_{\mathsf{A}_2})$ $\rightarrow$
        $(\textnormal{Rhombic}, c_L)$ $\rightarrow$ $(\Z^2,c_{\Z^2})$
        $\rightarrow$ $(\textnormal{Rectangular}, c_L)$, where
        $b_{\mathsf{A}_2}$ is the barycenter of a primitive triangle of
        $\mathsf{A}_2$ and $c_L$ is the center of the unit cell of the
        respective lattice. This is precisely the type of lattice phase
        transition we have observed for the two-dimensional inverse power law
        and Gaussian cases (see Fig.~\ref{fig_transition}). However, for
        $E_\delta$ the minimizer passes from triangular to rhombic in a
        discontinuous way while our numerics suggest that the transition is
          continuous for our energy $E_{f_1,f_2}^\pm$.
	\flushright{$\diamond$}
\end{remark}

\section{Numerical investigation}\label{sec-numerics}

This section is devoted to numerics to investigate the global optimality of the rock-salt structure for
inverse power law and Gaussian interactions (see Conjecture
\ref{con-conj1}). We order the presentation of our results with respect to the dimension. We used Mathematica
\cite{Mathematica} to perform the numerics. 
  We have also checked that the results of this paper still
should hold for the three-dimensional Coulomb attraction case $q=1/2$.

\subsection{The two-dimensional case} As explained in Rmk.~\ref{rmk:2dlattices}, the indexing of lattices in $\mathcal{L}(V)$, by using the
domain $\mathcal{D}$, is well-understood in dimension $d=2$. When the potentials
$f_1,f_2$ are absolutely summable, it is then easy and fast (in terms of
computation time) to compute, illustrate and compare the energy values in
$\mathcal{L}(V)$.

\subsubsection*{\textbf{a) Minimization of $E_{f_1,f_2}^\pm$ in $\mathcal{Q}$.}}
Recall that any two-dimensional orthorhombic (also called rectangular) lattice
can be characterized by one parameter $y \geq 1$, describing the geometry, and
one parameter $V > 0$, fixing the volume $V$, i.e.
$V^{\tfrac{1}{2}} L_a = V^{\tfrac{1}{2}} \big(y^{-1} \Z \times y \Z \big)$ for
$L_a \in \mathcal{Q}(1)$. 

In Fig.~\ref{fig_square_rect}, we have plotted the function $V\mapsto E_{f_1,f_2}^\pm[V^{\frac{1}{d}}L_a]$ for different values of $y$ and
fixed parameters $(p,q)=(4,3)$. The minimal energy is achieved by a square
lattice (i.e., $y=1$) which means that the minimum of $E_{f_1,f_2}^\pm$ in
$\mathcal{Q}$ is achieved by a rock-salt structure.
\begin{figure}[!htb]
	\includegraphics[width=.5\textwidth]{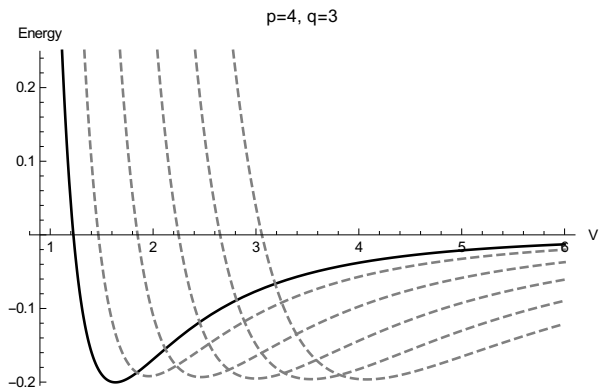}
	\caption{\footnotesize{Comparison of $V\mapsto E_{f_1,f_2}^\pm[V^{\frac{1}{2}}L_a]$ for different rectangular lattices $V^{1/2} \left(y^{-1} \Z \times y \Z \right)$ where $y\in \{1,1.4,1.8,2.2,2.6,3\}$, and inverse power laws $(f_1(r),f_2(r))=(r^{-4}, r^{-3})$. The solid line is the curve for the square lattice. Therefore, among structures with lattice in $\mathcal{Q}$, the energy $E_{f_1,f_2}^\pm$ seems to be minimized by a rock-salt structure.}}
	\label{fig_square_rect}
\end{figure}
      
Qualitatively, the same behavior (in accordance with
Thm.~\ref{thm-mainGaussians}), is observed in the Gaussian case for $\alpha = 1$, $\beta = 2$ (see Fig.~\ref{fig_Gauss_V}).  The critical value of the volume, for which the square lattice seems to be the \textit{global} minimizer in $\mathcal{Q}$, is $V_0\approx 0.464916$, as shown in Fig.~\ref{fig_Gauss_V} (a). Furthermore, in Fig.~\ref{fig_Gauss_V} (b), we have plotted the second derivative of $y \mapsto E_{f_1,f_2}[V^{\frac{1}{2}}L_a]$, evaluated at $y=1$, as a function of the volume $V$. Thus, we can check when the square lattice $V^{\frac{1}{2}}\Z^2$ is a local minimum or maximum of $E_{f_1,f_2}^\pm$. Again, it seems that there exists a critical value $V_1$ such that for all $V<V_1$ (resp. $V>V_1$) the square lattice is a local minimizer (resp. local maximizer) of $E_{f_1,f_2}^\pm$ in $\mathcal{Q}(V)$. Furthermore, numerical investigation show that if the square lattice is a local minimum for fixed $V$, then it is probably already the global minimizer for that $V$ in $\mathcal{Q}(V)$.     
\begin{figure}[!ht]
  \subfigure[For $V > V_1$ (dashed lines), the rock-salt structure is a local
  maximizer among 2-dimensional orthorhombic lattices of the given volume. For
  $V < V_1$ it is the unique minimizer.]{
    \includegraphics[width=.45\textwidth]{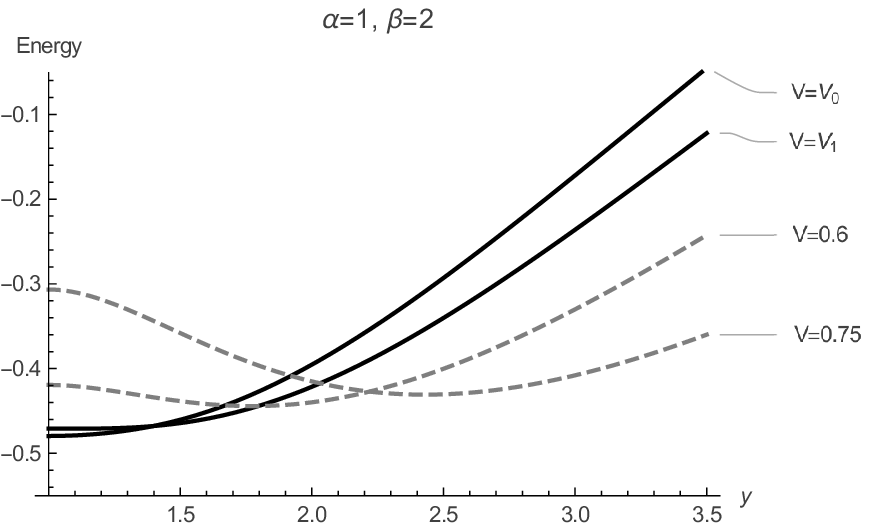} } \hfill
  \subfigure[{The second derivative of the function
    $y \mapsto E_{f_1,f_2}^\pm [V^{\frac{1}{2}} \left( y^{-1} \Z \times y \Z
    \right)]$ at $y=1$ as a function of $V$. The sign changes at $V = V_1$.}]{
    \includegraphics[width=.45\textwidth]{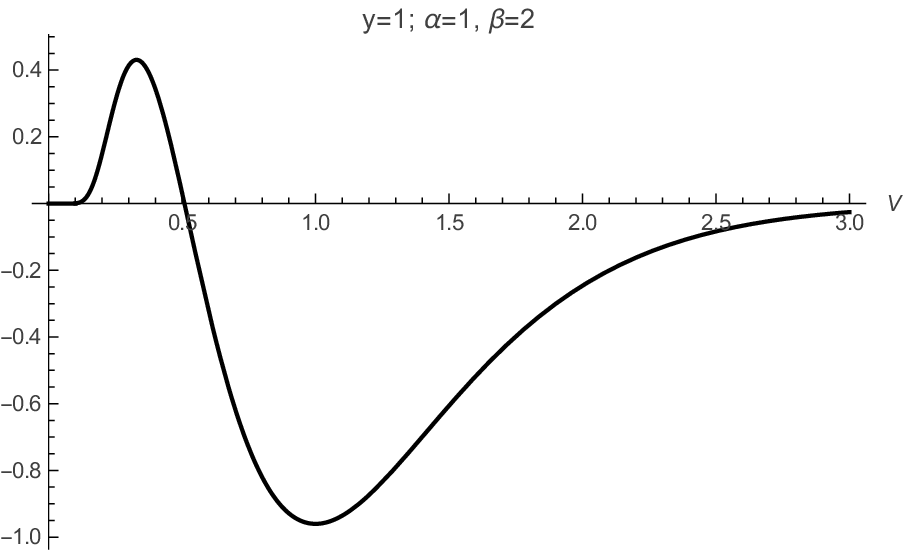} }
	\caption{\footnotesize{The energy $E_{f_1,f_2}^\pm [L]$ for rectangular lattices $V^{1/2} \left(y^{-1} \Z \times y \Z \right)$ of different volumes $V$ with Gaussian potentials as a function of $y$ (a). The value $V = V_0 \approx 0.464916$ gives the volume for which the rock-salt structure has minimal energy among $V > 0$. The value $V = V_1 \approx 0.508996$ is the threshold value, where the rock-salt structure turns from a minimizer into a local maximizer among orthorhombic lattices (b).}}
	\label{fig_Gauss_V}
\end{figure}

\subsubsection*{\textbf{b) Minimization of $E_{f_1,f_2}^\pm$ in $\mathcal{L}$}}

In the inverse power law case, by Thm.~\ref{thm-minenergyzetahomog}, it is possible to compare the energies of given  lattice ``shapes" (triangular, square, rhombic, rectangular) once $p,q$ are fixed. For instance, in Fig.~\ref{table-2d} we compare the minimal energy $\mathcal{E}_L^\pm$ given by \eqref{eq-minenergyL} of square and triangular lattices for different parameters $(p,q)$. This leads to good competitors for the minimization problem for $L\mapsto E_{f_1,f_2}^\pm[L]$ among all lattices.
\begin{figure}[!ht]
	\centering
	\begin{tabular}{|c|l|l|}
		\hline
		$(p,q)$		&\qquad\quad$\Z^2$		&\qquad\qquad$\mathsf{A}_2$\\
		\hline
		$(4,3)$		& $\mathcal{E}_{\Z^2}^\pm = -0.200328$,		& $\mathcal{E}_{\mathsf{A}_2}^\pm = -0.00730703$,\\
					& $V_{\Z^2} = 1.63374$				& $V_{\mathsf{A}_2} = 3.53775$\\
		\hline
		$(6,3)$		& $\mathcal{E}_{\Z^2}^\pm = -0.751062$,		& $\mathcal{E}_{\mathsf{A}_2}^\pm  = -0.165143$,\\
					& $V_{\Z^2} = 1.32499$				& $V_{\mathsf{A}_2} = 1.57651$\\
		\hline
	\end{tabular}
	\medskip
	\caption{\footnotesize{Comparision of the minimal energy of the square and triangular lattice with alternating charges and for power law potentials with fixed exponents $(p,q)$. For each lattice the optimal volume $V_L$ is chosen and the minimal energy $\mathcal{E}_L^\pm$ computed (cf. \eqref{def:VL}, \eqref{eq-minenergyL}). In both cases, the energy of the square lattice is smaller than the energy of the triangular lattice.}}\label{table-2d}
\end{figure}

In Fig.~\ref{fig_transition}, we have plotted the energy $L \mapsto E_{f_1,f_2}^\pm[V^{\frac{1}{2}}L]$ in the fundamental domain $\mathcal{D}$ defined by \eqref{def:Dlattices} which describes arbitrary two--dimensional lattices. The plots are for different values of $V$ and for the inverse power law case $(p,q)=(4,3)$. As $V$ increases, we observe a phase transition of the minimizer's shape of the form: triangular - rhombic - square - rectangular.
\begin{figure}[!htbp]
 	\centering
 	\subfigure[For $V=0.01$ the triangular lattice structure yields the minimum.]{
    \includegraphics[width=.45\textwidth,height=.5\textwidth]{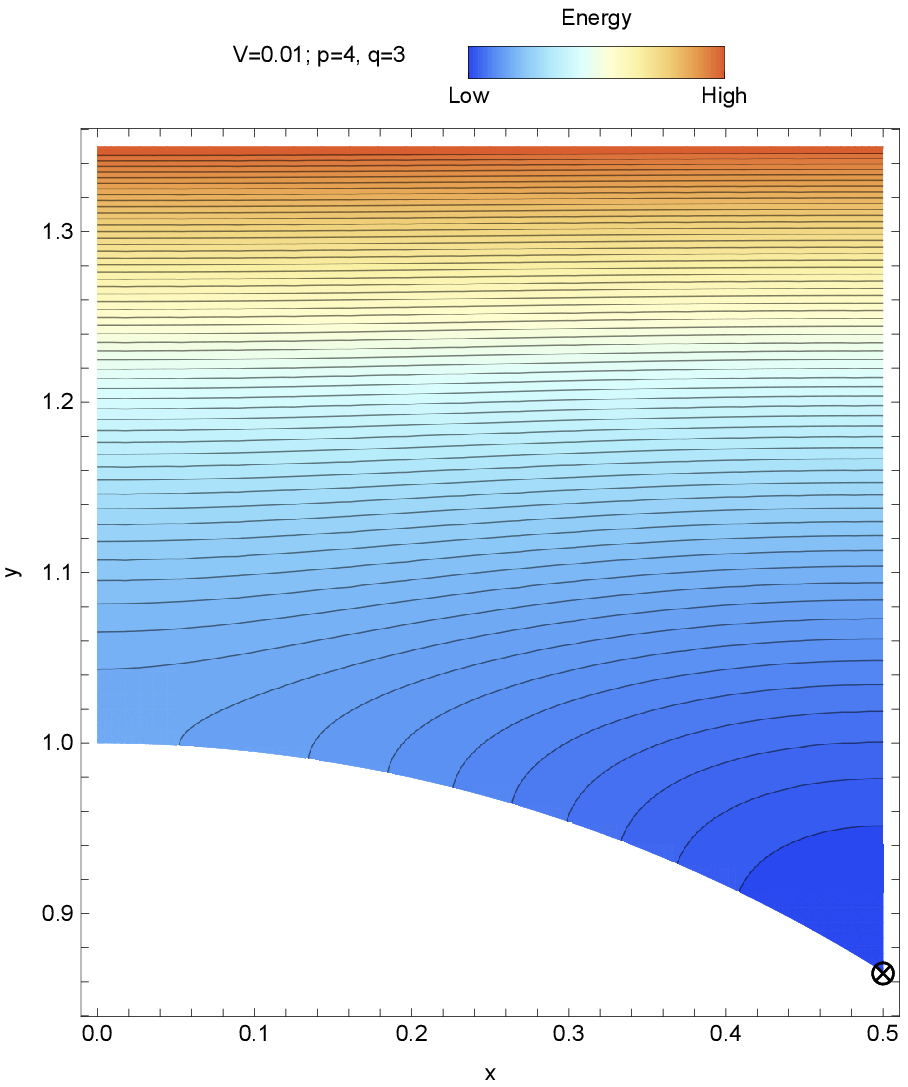}}
    \hfill
	\subfigure[For $V=0.3$ the minimizer is given by a rhombic lattice.]{
	\includegraphics[width=.45\textwidth,height=.5\textwidth]{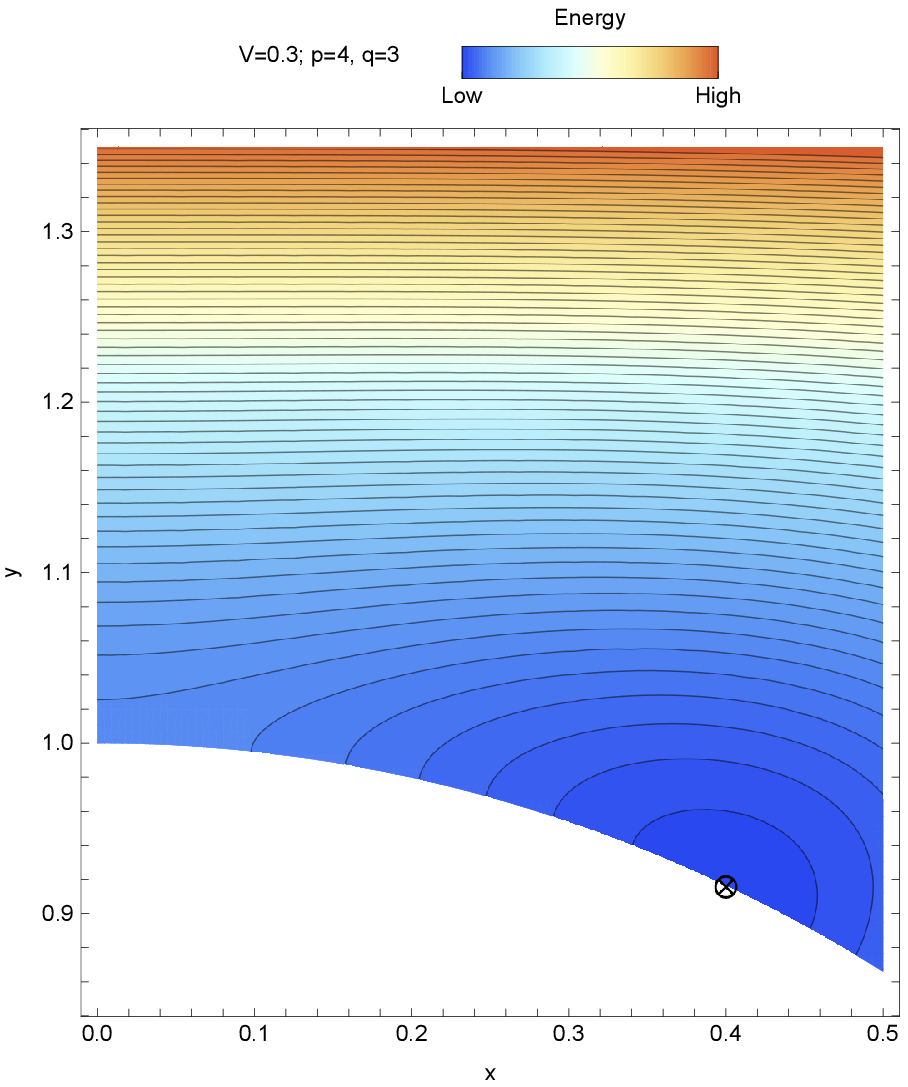}}
	\\
	\subfigure[For $V=1$ the minimizer is given by the square lattice.]{
    \includegraphics[width=.45\textwidth,height=.5\textwidth]{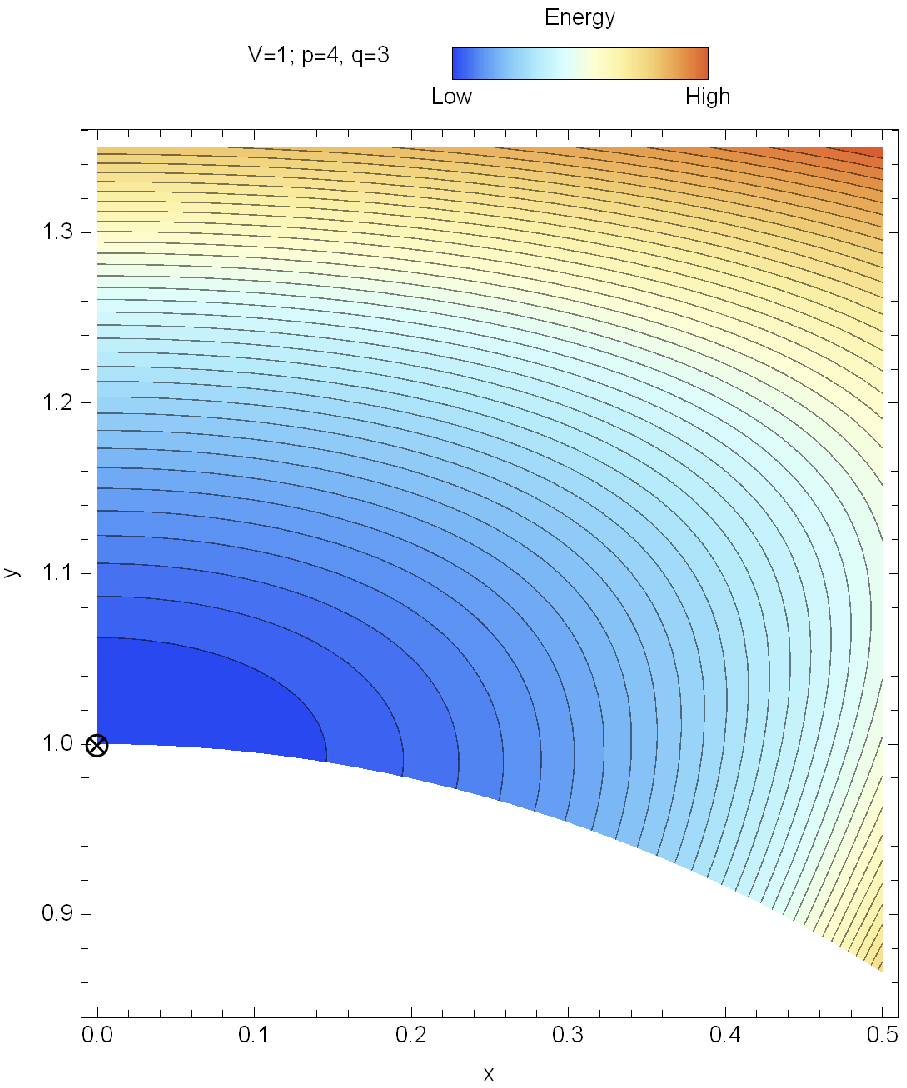}}
    \hfill
	\subfigure[For $V=1.85$ the minimizer is given by an orthorhombic lattice.]{
	\includegraphics[width=.45\textwidth,height=.5\textwidth]{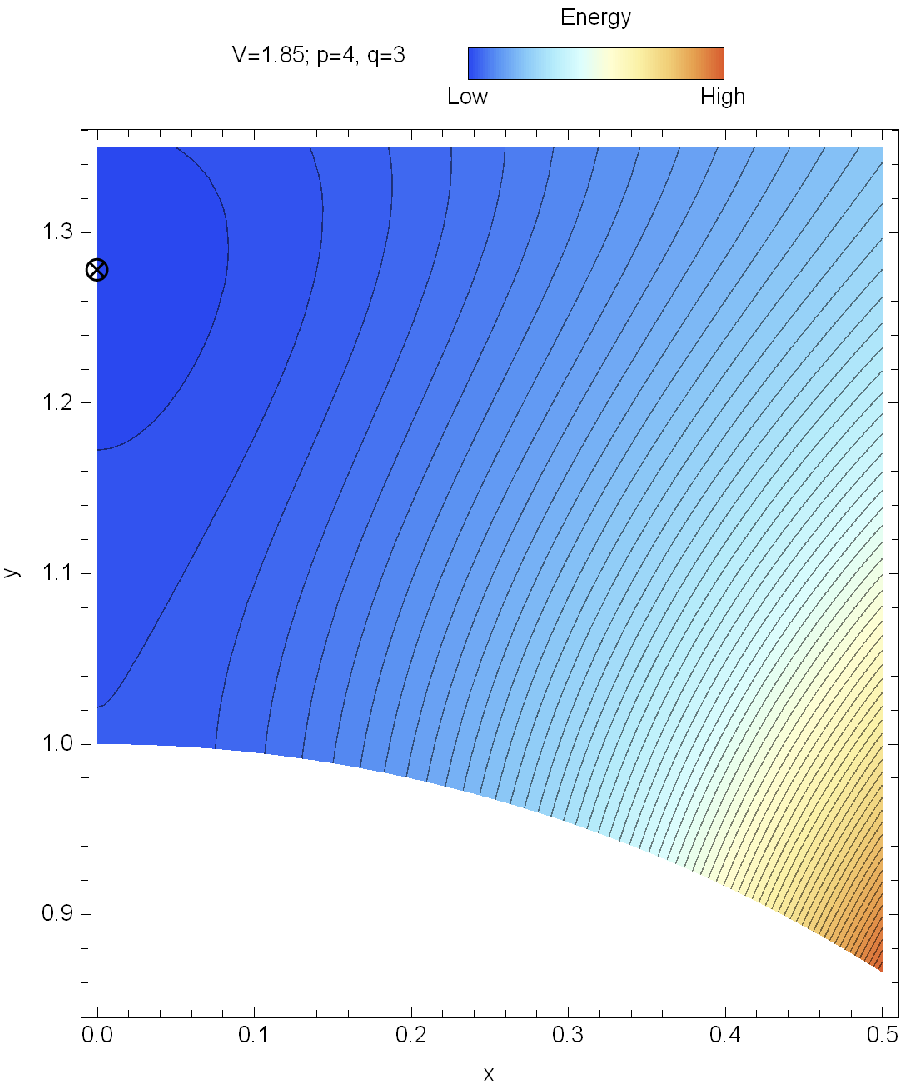}}
	\caption{\footnotesize{Contour plots of the energy $E_{f_1,f_2}^\pm$ in two dimensions for power law potentials $f_1$, $f_2$ with $(p,q) = (4,3)$ and for different volumes $V > 0$. The plot is restricted to the fundamental domain $\mathcal{D}$ defined in Rmk.~\ref{rmk:2dlattices}. The triangular lattice corresponds to $(x,y) = ( \tfrac{1}{2}, \tfrac{\sqrt{3}}{2})$, the square lattice to $(x,y) = (0,1)$. For the minimizer (marked by $\otimes$), we observe a phase transition of the type `triangular-rhombic-square-rectangular', as $V$ increases.}}
\label{fig_transition}
\end{figure}

Furthermore, in Fig.~\ref{fig_2d_rhombic}, we have plotted the energy for different rhombic lattices, including the square and the triangular lattice. The numerics suggest that the square lattice is optimal for this model.
\begin{figure}[!ht]
	\subfigure[Energy of rhombic lattices for the inverse power law case $(p,q)=(4,3)$.]{
    \includegraphics[width=0.425\textwidth]{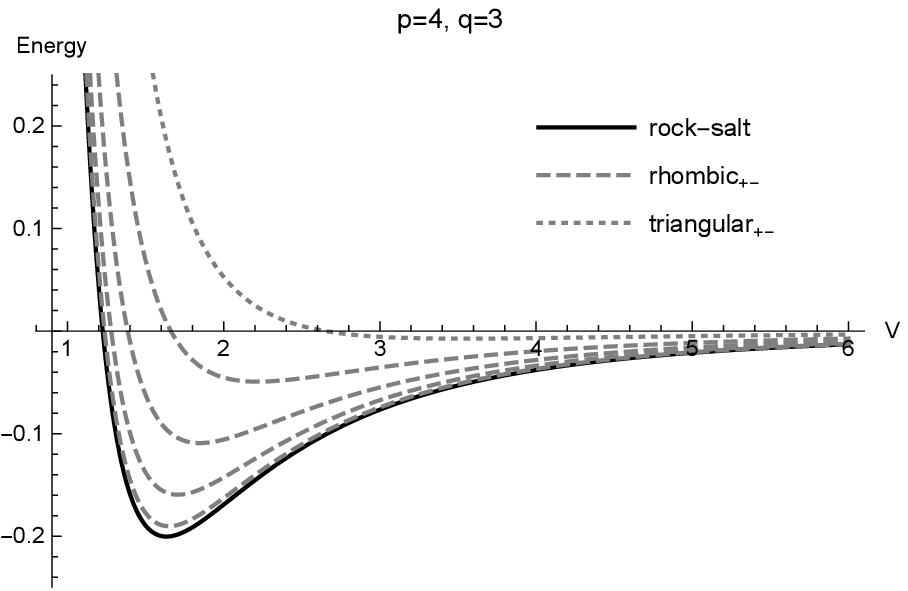} }
    \hfill
	\subfigure[Energy of rhombic lattices for the Gaussian case $(\alpha, \beta)= (1,2)$.]{
	\includegraphics[width=.425\textwidth]{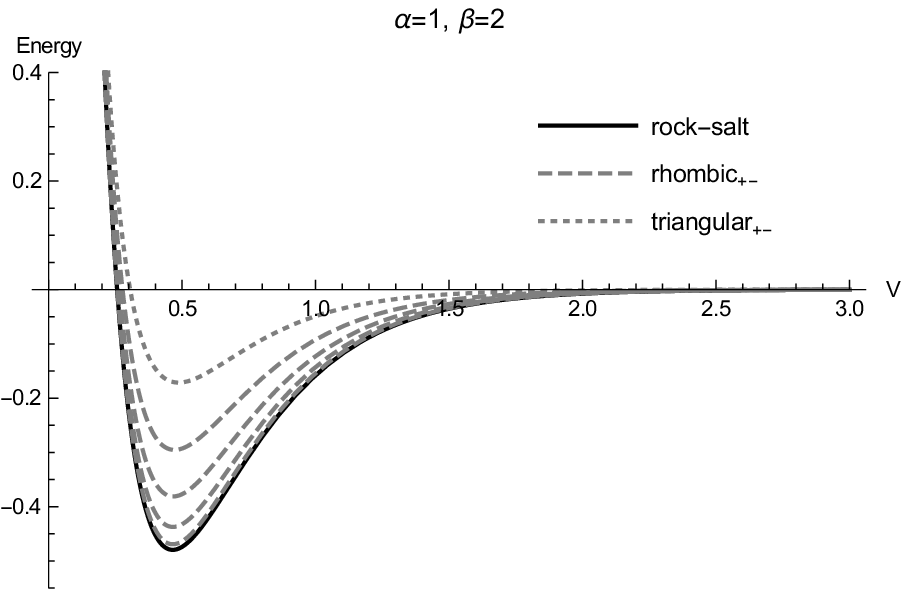} }
	\caption{\footnotesize{{Comparison of the energy, in the inverse power law and the Gaussian cases, amongst the rock-salt structure (solid line), the triangular structure with alternating charges (dotted line) and orthorhombic structures (dashed lines) with alternating charges.}}}\label{fig_2d_rhombic}
\end{figure}

This is also confirmed by Fig.~\ref{fig_2d_p4q3inD} where $L\mapsto \mathcal{E}_L^\pm$ is plotted, for $(p,q)=(4,3)$, on the fundamental domain $\mathcal{D}$ defined in Rmk.~\ref{rmk:2dlattices} and where the square lattice appears to be the unique minimum of this energy.
\begin{figure}[!ht]
	\includegraphics[width=.45\textwidth,height=.5\textwidth]{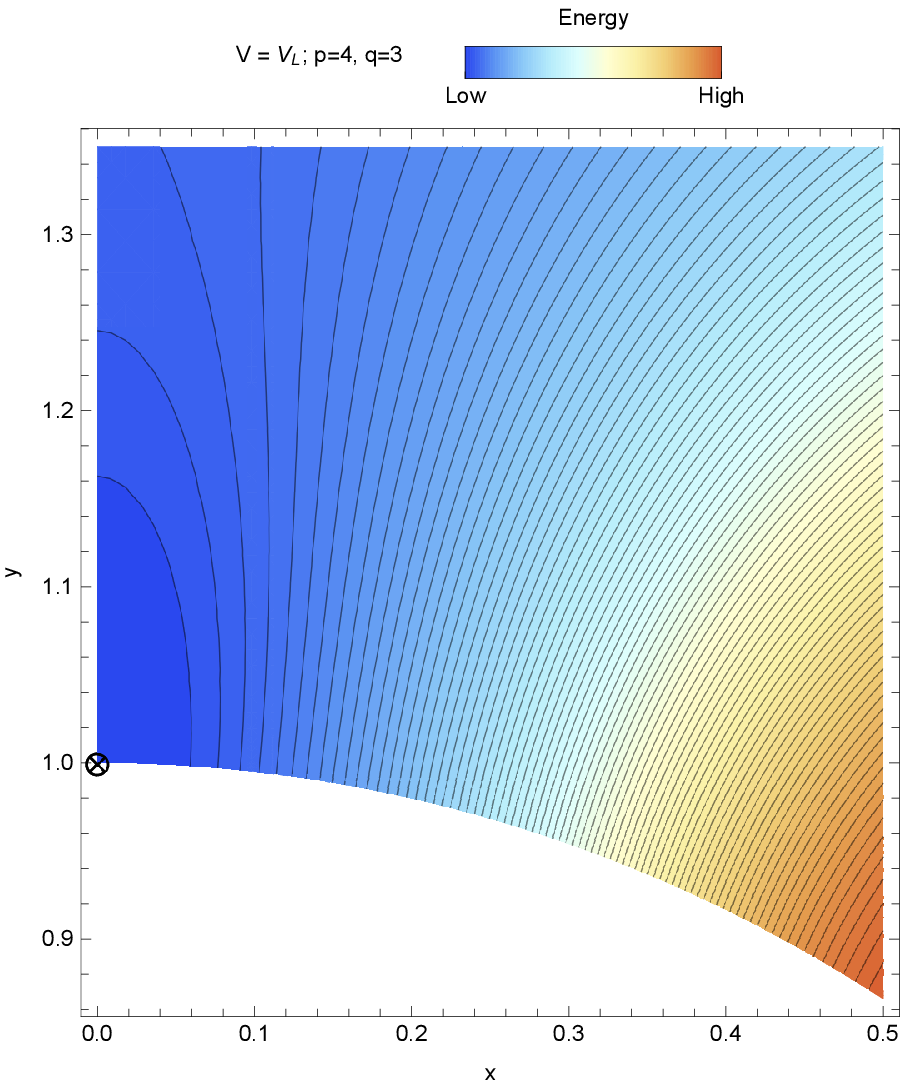} 
  \caption{\footnotesize{{Contour plot of $L\mapsto \mathcal{E}_L^\pm$ defined by \eqref{eq-minenergyL} in the inverse power law case with $(p,q)=(4,3)$. The minimum is achieved for the square lattice.}}}
	\label{fig_2d_p4q3inD}
\end{figure}
The same is observed for $(p,q)=(6,3)$. In the Gaussian case, we observe the same qualitative behavior. In particular, the fact that $\mathcal{E}_{\Z^2}^\pm<\mathcal{E}_{\mathsf{A}_2}^\pm$ appears to be true for any values of the parameters $p>q$, and the same is observed for the Gaussian case for any $\beta>\alpha$.

We note that the same phase transition for the minimizer with respect to the density has been observed for Lennard-Jones type potentials \cite{Beterloc,SamajTravenecLJ}, Morse type potentials \cite{LBMorse}, two-components Bose-Einstein condensates \cite{Mueller:2002aa} (see also Rmk.~\ref{rmk-BEC}) and 3-block co-polymers \cite{LuoChenWei}.

\subsubsection*{\textbf{c) Comparison of energies for lattices with optimal charge distribution}}
In Fig.~\ref{fig_Gauss_square_rhombic}, we compare the energy $V\mapsto E_{f_1,f_2}[V^{\frac{1}{2}}L,\varphi]$ for $L$ being the square and the triangular lattices and $\varphi$ being the alternate distribution of charges $\varphi_\pm$ or the optimal ``honeycomb-like" distribution of charges  $\varphi_{\tinyvarhexagon}$ for the triangular lattice found in \cite{BeterminKnuepfer-preprint}. In the inverse power law case $(p,q) = (4,3)$ and the Gaussian case $(\alpha, \beta) = (1,2)$, the rock-salt structure yields again the smallest value of the energy.
\begin{figure}[!htb]
	\subfigure[The inverse power law case for $(p,q) = (4,3)$.]{
		\includegraphics[width=.45\textwidth]{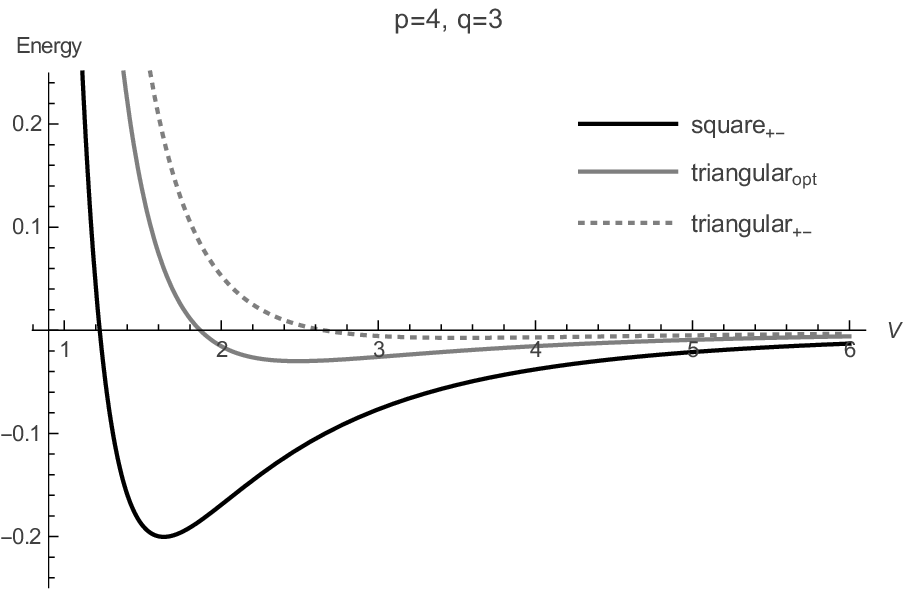}}
	\hfill
	\subfigure[The Gaussian case for $(\alpha, \beta) = (1,2)$.]{
    	\includegraphics[width=.45\textwidth]{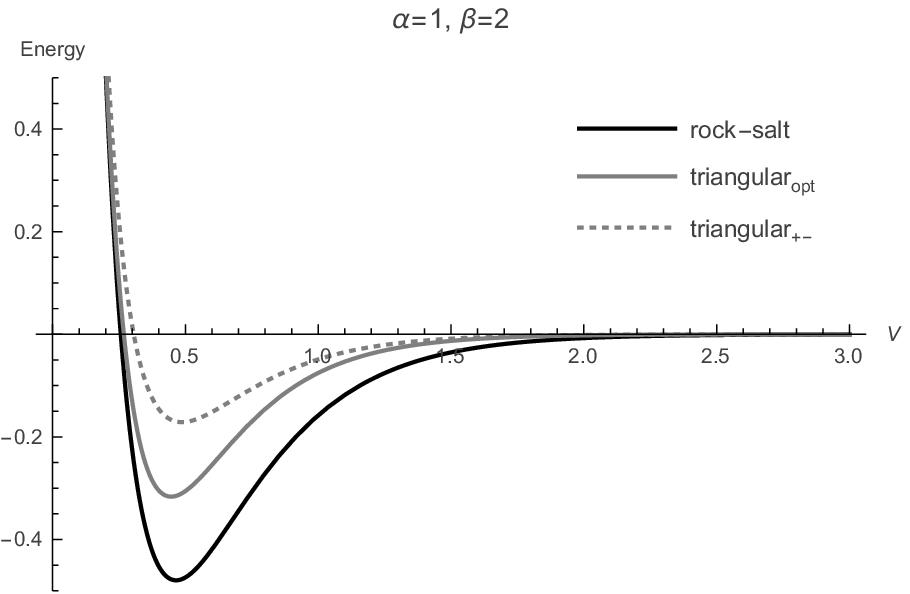}}
	\caption{\footnotesize{Comparison of $(\Z^2,\varphi_\pm)$ with $(\mathsf{A}_2, \varphi_{\tinyvarhexagon})$ (see Def.~\ref{def-chargedlattice}) and $(\mathsf{A}_2, \varphi_\pm)$. For fixed $V > 1$, the square lattice structure yields the smallest energy and the triangular structure with alternating charges gives the maximal energy among the compared structures.}}\label{fig_Gauss_square_rhombic}
\end{figure}

\subsection{Comparing the rock-salt structure to  BCC, FCC and $\mathsf{E}_8$}
In dimensions $d > 2$, the geometry of the fundamental domain $\mathcal{D}_d$ is much more complicated to describe (see, e.g., \cite{SarStromb}). Therefore, we only compare the energy of orthorhombic lattices and the special lattice structures BCC and FCC. As already explained in \cite{LBMorse}, these are the only possible lattices being critical points of $E_{f,0}$ in $\mathcal{L}(V)$. Thus, they are the main candidates for solving our minimization problem.

In Fig.~\ref{table-3d}, using Thm.~\ref{thm-minenergyzetahomog}, we give some values of the minimal energy $\mathcal{E}_L^\pm$ for the $\Z^3$, BCC and FCC structures with alternating charges in the case of power law potentials.
\begin{figure}[htb]
  \centering
    \begin{tabular}{|c|l|l|l|}
      \hline
      $(p,q)$		&\qquad\quad$\Z^3$				&\qquad\quad BCC					&\qquad\quad FCC\\
      \hline
      $(4,3)$		& $\mathcal{E}_{\Z^3}^\pm=-0.165476$,	& $\mathcal{E}_{\textnormal{BCC}}^\pm=-0.0000699095$,	& $\mathcal{E}_{\textnormal{FCC}}^\pm=-0.00685824$,\\
                        & $V_{\Z^3}=2.68968$		& $V_{\textnormal{BCC}}=44.3038$			& $V_{\textnormal{FCC}} = 7.89246$\\
      \hline
      $(6,3)$ 		& $\mathcal{E}_{\Z^3}^\pm=-0.924244$,	& $\mathcal{E}_{\textnormal{BCC}}^\pm=-0.0235324$, 	& $\mathcal{E}_{\textnormal{FCC}}^\pm=-0.240695$,\\
                        & $V_{\Z^3}=1.60949$		& $V_{\textnormal{BCC}}=3.415$ 				& $V_{\textnormal{FCC}} = 1.88408$\\
      \hline
    \end{tabular}
    \medskip
    \caption{\footnotesize{Values of the minimal energies $\mathcal{E}_L^\pm$
        and the corresponding volume $V_L$ for the cubic, BCC and FCC lattice
        structures in dimensions $d=3$ (cf.\ \eqref{eq-minenergyL},
        \eqref{def:VL}).} }
        \label{table-3d}
\end{figure}
We observe again that the cubic lattice $V_{\Z^3}^{1/3}\Z^3$ seems to be a good candidate for minimizing $E_{f_1,f_2}^\pm$ in $\mathcal{L}$. Furthermore, the fact that $\mathcal{E}_{\Z^3}^\pm<\mathcal{E}_{\textnormal{FCC}}^\pm<\mathcal{E}_{\textnormal{BCC}}^\pm$ appears to be true for any values of the parameters $p>q$ and the same holds in the Gaussian case for any $\beta>\alpha$.

In Fig.~\ref{fig_3d_special} we compare the energy of the 3-dimensional rock-salt structure with the FCC lattice structure with alternating charges as well as with the BCC structure with alternating charges and its (expected) optimal charge configuration (i.e., two cubic lattices of opposite charges shifted by the center of the primitive cube). The latter structure is found in Cesium Chloride ``CsCl".
\begin{figure}[!htb]
	{
		\includegraphics[width=.45\textwidth]{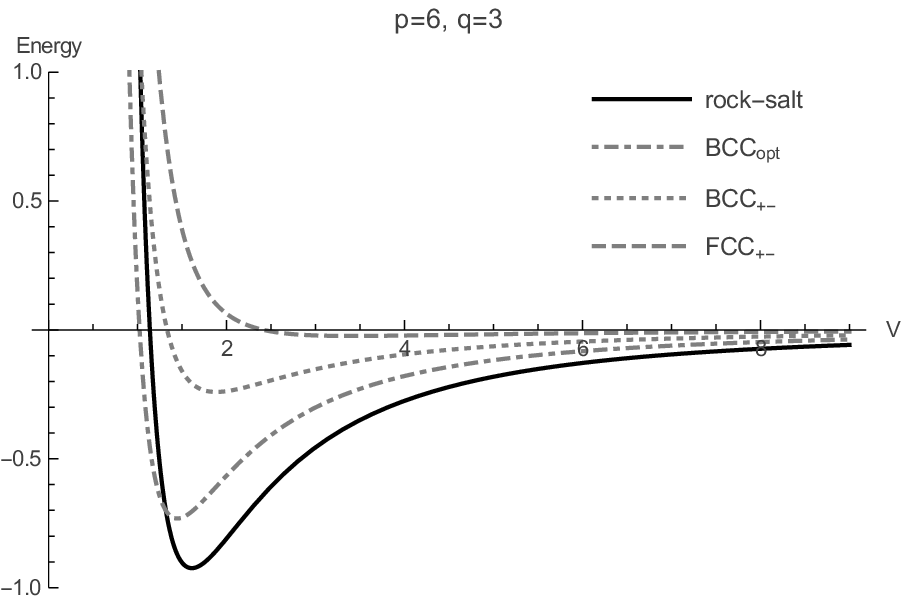}
	}
	\caption{\footnotesize{Comparison of energies, as functions of the volume $V$, in the inverse power laws case in 3 dimensions for $p=6$ and $q=3$. For the BCC lattice we also compared the (expected) optimal and alternating charge configurations ($\varphi_{opt}$ and $\varphi_\pm$, respectively). Among all compared lattices and all $V$, a cubic lattice yields the global minimum.}}
	\label{fig_3d_special}
\end{figure}

In the Gaussian case, we again compared the rock-salt structure to orthorhombic structures with alternate charge distribution. We numerically computed the value $V_1 \approx 0.623556$, which is (close to) the threshold value at which the rock-salt structure stops to be the minimizer. This was done by computing the eigenvalues of the Hessian matrix with respect to the lattice parameters. It was then evaluated at the parameters which yield the cubic lattice $V^{1/3}\Z^3$. In Fig.~\ref{fig_Gauss3d_hess} we show the behavior of the eigenvalues of the Hessian matrix as a function of $V$.
\begin{figure}[!htb]
	\centering
	\includegraphics[width=.45\textwidth]{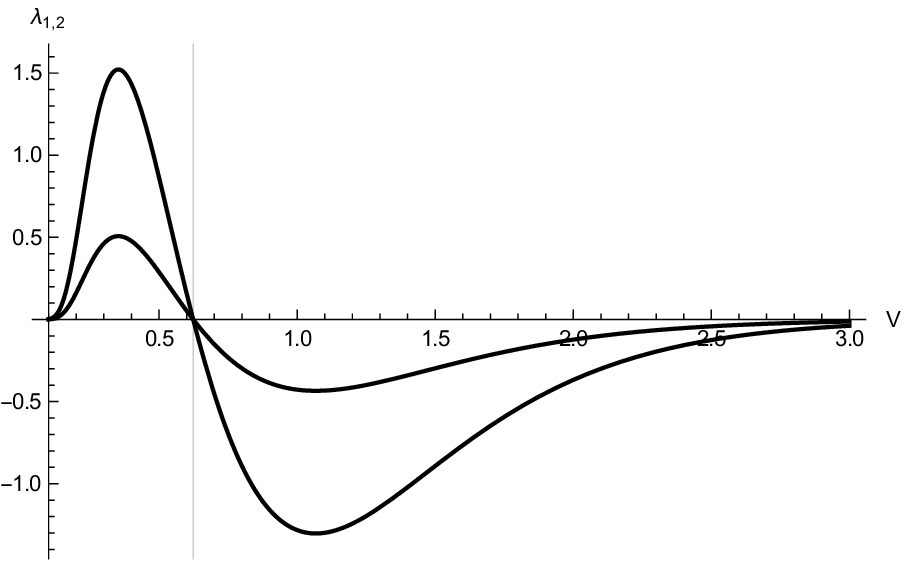}
	\caption{\footnotesize{Three-dimensional Gaussian case. The plot shows the eigenvalues $\lambda_{1,2}$ of the Hessian matrix of the energy $E_{f_1,f_2}^\pm$ evaluated at $a_1=a_2=1$ (rock-salt structure) in dependence of $V$. Above the threshold value $V_1 \approx 0.623556$, the rock-salt structure yields a global maximum for the energy model.}}
	\label{fig_Gauss3d_hess}
\end{figure}
We see that, among orthorhombic lattices, the cubic lattice seems to be a local minimizer (resp. maximizer) for $V<V_1$ (resp. $V>V_1$). In Fig.~\ref{fig_Gauss3d}, we illustrate a case where the rock-salt structure is a minimizer of the energy ($V < V_1$) as well as a case where the rock-salt structure is a local maximizer of the energy ($V > V_1$).

\begin{figure}[!htb]
	\subfigure[The rock-salt structure as a global minimizer.]{
	\includegraphics[width=.45\textwidth]{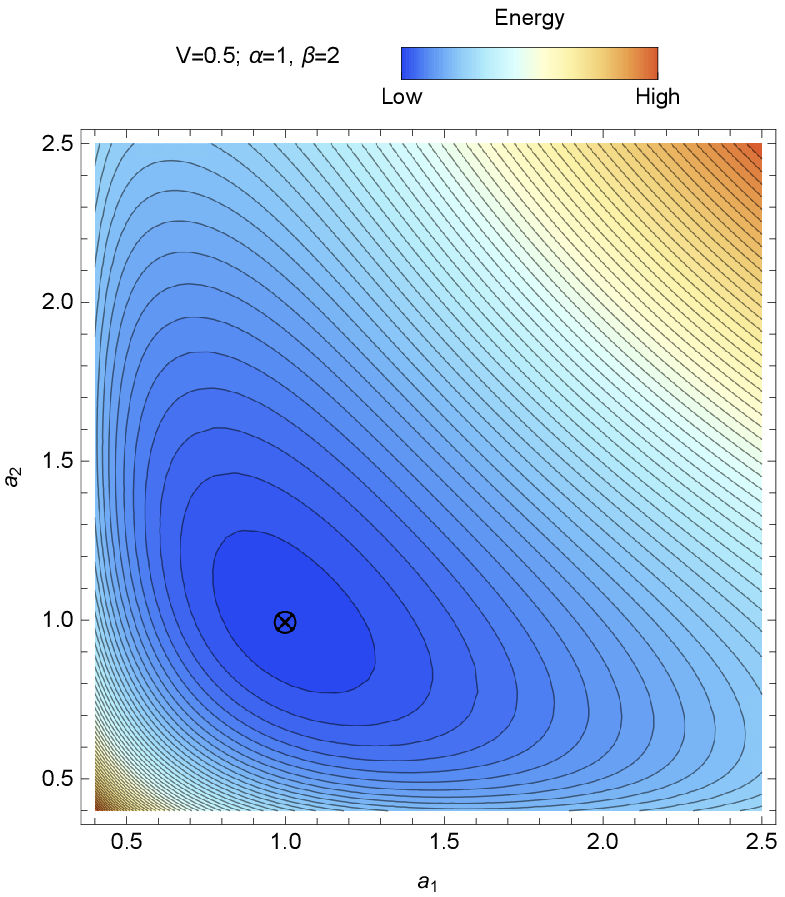}}
	\hfill
	\subfigure[The rock-salt structure as a local maximizer.]{
	\includegraphics[width=.45\textwidth]{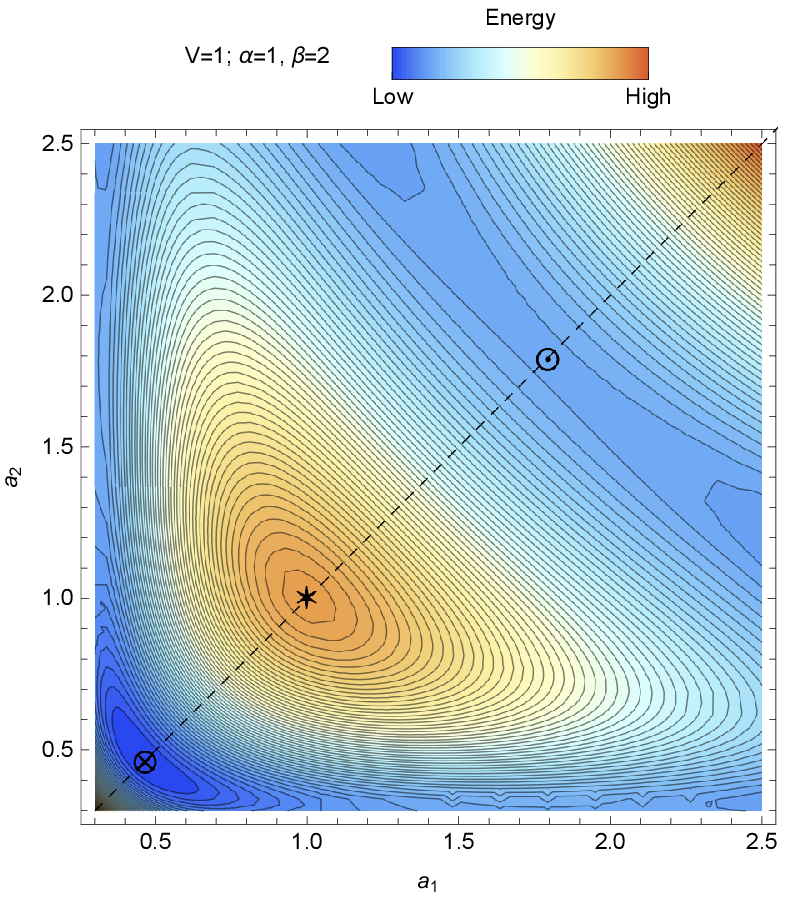}}
    \caption{\footnotesize{Plot of the energy $E_{f_1,f_2}^\pm$ for Gaussian interaction potentials with $\alpha = 1$, $\beta = 2$ for orthorhombic lattices of the form $V^{1/3} (a_1 \Z \times a_2 \Z \times (a_1 a_2)^{-1} \Z)$. (a) For volume V=0.5, the rock-salt structure $(a_1,a_2) = (1,1)$ is the global minimizer $\otimes$. (b) For $V = 1$, the rock-salt structure is a local maximum. The global minimum is marked with $\otimes$ and lies on the line $a_1 = a_2$. On the same line, there is a saddle point, marked with $\odot$. The rock-salt structure yields a local maximum $\ast$.}}\label{fig_Gauss3d}
\end{figure}

Let us now give our numerical findings in the eight-dimensional case. In the inverse power law case, using again Thm.~\ref{thm-minenergyzetahomog}, we have computed minimal energies $\mathcal{E}_L^\pm$ for rock-salt structures and the $\mathsf{E}_8$ lattice in dimensions $d = 8$, which is again a ``density stable" lattice in the sense of \cite{LBMorse}. Obviously, we have chosen this dimension and lattice structure due to the highly important universal minimality results proven in \cite{Viazovska,CKMRV,CKMRV2Theta}. A numerical comparison of the rock-salt structure and the Leech lattice with alternating charge distribution in dimension $d = 24$ was computationally not feasible, not even on GPU clusters, in a reasonable time with our methods. However, we assume that the 24-dimensional rock-salt structure will outperform the Leech lattice as well.\footnote{At this point, we would like to thank Pavol Harár from the Data Science Group at the Faculty of Mathematics of the University of Vienna for confirming our numerical results in dimension 8 and trying to speed up our numerics in dimension 24. Even though a significant speed-up was achieved, this was unfortunately still not enough to get results in a reasonable time.}

Nonetheless, we observe again that the cubic lattice $V_{\Z^8}^{1/8}\Z^8$ seems to be a good candidate for minimizing $E_{f_1,f_2}^\pm$ in $\mathcal{L}$, and, generally, to be a good candidate for the \textit{global} minimizer of our problem. In particular, the ``usual" minimizer, i.e., $\mathsf{E}_8$ does not seem to be the optimal candidates for our model in dimension 8. Also, we have no reason to believe that the Leech lattice $\Lambda_{24}$ will outperform the 24-dimensional rock-salt structure for our model.
\begin{figure}[!ht]
  \centering
  \begin{tabular}{|c|l|l|}
    \hline
    $(p,q)$ &\qquad\quad$\Z^8$ &\qquad\quad$\mathsf{E}_8$ \\
    \hline
    $(12,8)$ & $\mathcal{E}_{\Z^8}^\pm = -2.19656$ & $\mathcal{E}_{\mathsf{E}_8}^\pm = -0.00771459$  \\
            & $V_{\Z^8}=1.53947$ & $V_{\mathsf{E}_8}=1.52106$  \\
    \hline
  \end{tabular}
  \medskip
  \caption{\footnotesize{Values of the minimal energies
      $\mathcal{E}_L^\pm$ given by \eqref{eq-minenergyL} and the
      corresponding volume \eqref{def:VL} in dimension $d=8$ for comparing
      the cubic lattice with the minimizers of the Epstein zeta functions
      as proved in \cite{CKMRV2Theta}.}}
  \label{table-8d24d}
\end{figure}

\subsection{\textbf{Non--optimality of the rock-salt structure. }}
We notice an interesting phenomenon if $p-q$ and $\beta-\alpha$ are small enough. In that case, the global minimizer of $E_{f_1,f_2}^\pm$ is no longer a rock-salt structure. Indeed, Fig.~\ref{fig_Gaussip_nonopt} shows that the rock-salt structure is not the \textit{global} minimizer among orthorhombic lattices for the inverse power law case with parameters $(p,q)=(4,3.75)$. The same holds in the Gaussian case for $(\alpha,\beta)=(1.8,2)$. This behavior has also been observed in dimension $3$ for the same values of the parameters.
\begin{figure}[!htb]
	\subfigure[Non-optimality of the rock-salt structure in the inverse power law case.]{\includegraphics[width=.45\textwidth]{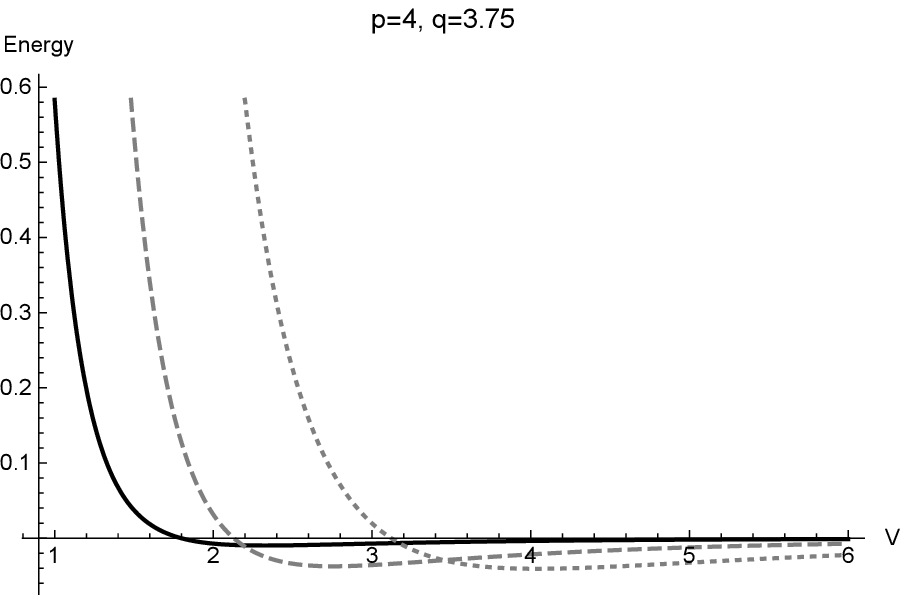}}
	\hfill
	\subfigure[Non-optimality of the rock-salt structure in the Gaussian case.]{\includegraphics[width=.45\textwidth]{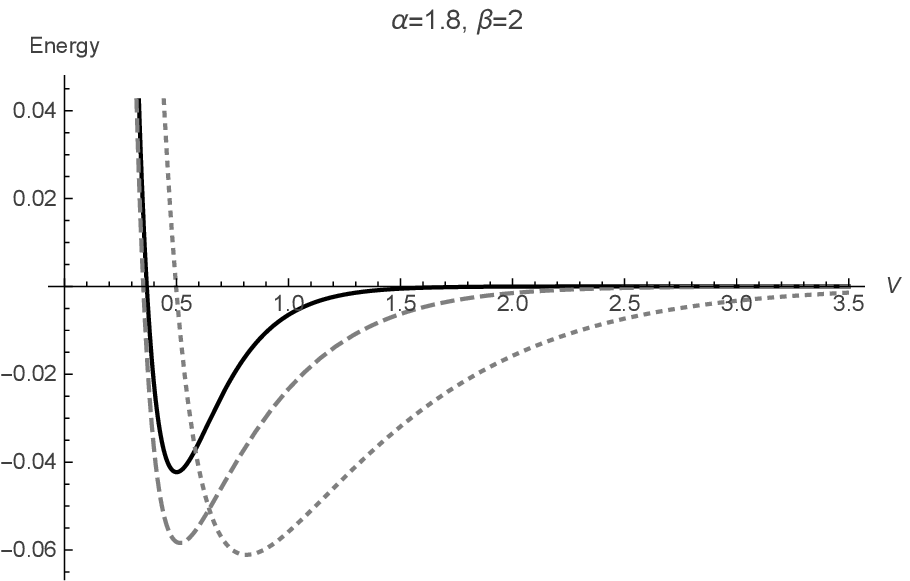}}
	\caption{\footnotesize{Comparison of $V\mapsto E_{f_1,f_2}^\pm[V^{\frac{1}{2}}L_a]$ for different rectangular lattices. The black line is the curve for the square lattice. For these values of the parameters $(p,q)=(4,3.75)$ and $(\alpha,\beta)=(1.8,2)$, the rock-salt structure is not the global minimizer of $E_{f_1,f_2}^\pm$ in $\mathcal{Q}$.}}\label{fig_Gaussip_nonopt}
\end{figure}

Furthermore, in the inverse power law case, it seems that $E^\pm_{f_1,f_2}$ does not have any minimizer when $p-q$ is small enough. In Fig.~\ref{fig_ipnonopt} we have compared the lowest possible energy of rectangular lattices by plotting $y\mapsto \mathcal{E}_{L_a}^\pm$, defined by \eqref{eq-minenergyL} for $p=4$ and $q\in \{3.15,3.75\}$. It appears that $\mathcal{E}_L^\pm$, and then $E_{f_1,f_2}^\pm$, do not have any minimizer in $\mathcal{Q}$ when $p-q$ is small enough, and the same is actually observed in $\mathcal{L}$.
\begin{figure}[!htb]
	\subfigure[The rock-salt structure as a local minimizer.]{
	\includegraphics[width=.45\textwidth]{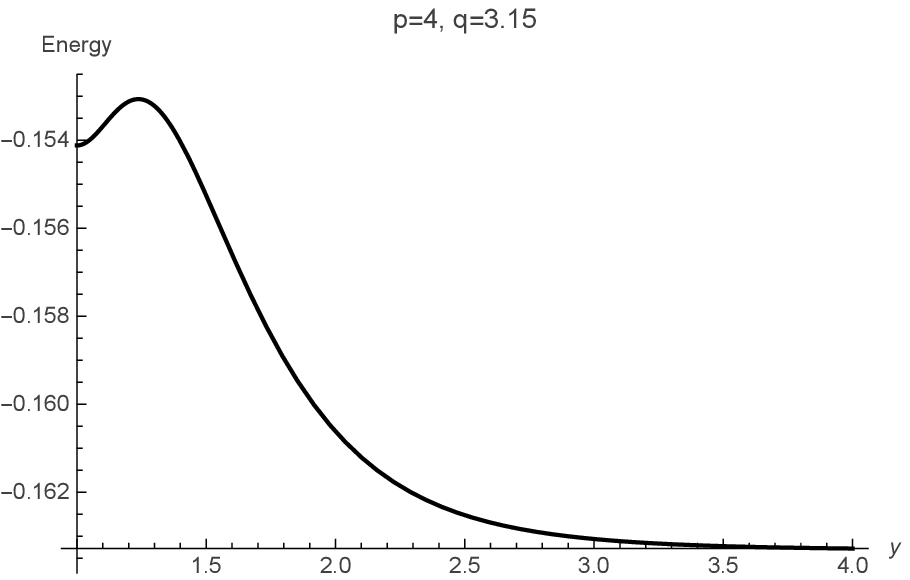}}
	\hfill
	\subfigure[The rock-salt structure as a global maximizer.]{
	\includegraphics[width=.45\textwidth]{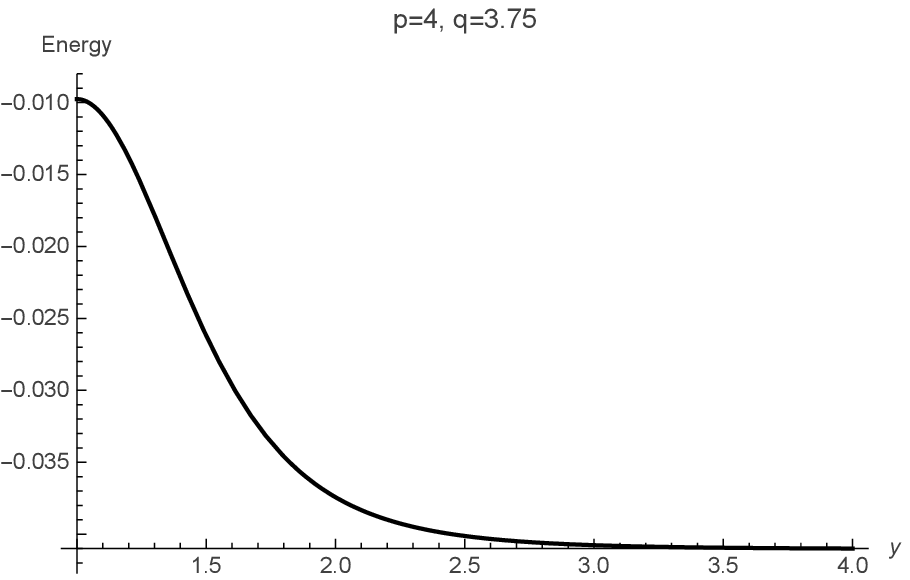}}
	\caption{\footnotesize{Plot of $y\mapsto \mathcal{E}_{L_a}^\pm$ when $L_a:=\Z(y,0)\oplus \Z(0,y^{-1})$, $p=4$ and $q\in \{3.15,3.75\}$. The energy $L\mapsto \mathcal{E}_L^\pm$ does not have a minimizer in $\mathcal{Q}$ for these values of the parameters.}}\label{fig_ipnonopt}
\end{figure}

\newpage
\bibliographystyle{plain}
\bibliography{mincharges}
\end{document}